\newcommand\mathscr[1]{\mathcal{#1}}
\tikzset{notestyleraw/.append style={rounded corners = 1pt}}
\renewcommand{\@todonotes@textsize}{\footnotesize\sffamily}
\def\rightharpoonupfill@{\arrowfill@\relbar\relbar\rightharpoonup}
\newcommand{\overrightharpoonup}{%
\mathpalette{\overarrow@\rightharpoonupfill@}}
\NewDocumentCommand{\etc}{m}{\overrightharpoonup{#1}}
\theoremstyle{plain}
\newtheorem{theorem}{Theorem}[section]
\newtheorem{lemma}[theorem]{Lemma}
\theoremstyle{definition}
\newtheorem{notation}[theorem]{Notation}
\theoremstyle{assumption}
\newtheorem{assumption}[theorem]{Assumption}
\newtheorem{convention}[theorem]{Convention}
\newtheorem{remark}[theorem]{Remark}
\NewDocumentCommand\Eta{}{\mathrm{H}}
\NewDocumentEnvironment{grammar}{}{%
  \begin{center}%
    \begin{tabular}{>{\itshape(}l<{)} >{$}l<{$} @{\quad $\Coloneqq$\quad } >{$}l<{$}}%
}{%
    \end{tabular}%
  \end{center}%
}
\NewDocumentCommand\NewPairedDelimiter{mmmm}{%
  \NewDocumentCommand#2{mmm}{%
    \IfNoValueTF{##2}
      {\IfBooleanTF{##1}
        {\mleft#3##3\mright#4}
        {#3##3#4}}
      {\mathopen{##2#3}##3\mathclose{##2#4}}%
  }
  \NewDocumentCommand#1{som}{#2{##1}{##2}{##3}}
}
\NewPairedDelimiter{\braces}{\rawbraces}{\{}{\}}
\NewPairedDelimiter{\parens}{\rawparens}{(}{)}
\NewPairedDelimiter{\brackets}{\rawbrackets}{[}{]}
\NewPairedDelimiter{\verts}{\rawverts}{\lvert}{\rvert}
\NewPairedDelimiter{\Verts}{\rawVerts}{\lVert}{\rVert}
\NewPairedDelimiter{\bbrackets}{\rawbbrackets}{\llbracket}{\rrbracket}
\NewDocumentCommand\Sem{m}{\bbrackets*{#1}}
\NewDocumentCommand\OB{}{\Sym{ob}}
\NewDocumentCommand\HOM{}{\Sym{hom}}
\NewDocumentCommand\HOMID{}{\Sym{id}}
\NewDocumentCommand\HOMCMP{}{\Sym{cmp}}
\NewDocumentCommand\Hom{m m}{#1\Rightarrow{}#2}
\NewDocumentCommand\HomId{}{\Sym{id}}
\NewDocumentCommand\HomCmp{m m}{#1 \circ #2}
\NewDocumentCommand\SigCAT{}{\mathbb{T}_{\Sym{cat}}}
\NewDocumentCommand\SigCWF{}{\mathbb{T}_{\Sym{cwf}}}
\NewDocumentCommand\SigMLTT{}{\mathbb{T}_{\Sym{cwf}}^{\PI}}
\NewDocumentCommand\TY{}{\Sym{Ty}}
\NewDocumentCommand\EL{}{\Sym{El}}
\NewDocumentCommand\EMP{}{\Sym{emp}}
\NewDocumentCommand\EXT{}{\Sym{ext}}
\NewDocumentCommand\TYACT{}{\Sym{Ty/act}}
\NewDocumentCommand\ELACT{}{\Sym{El/act}}
\NewDocumentCommand\PROJ{}{\Sym{p}}
\NewDocumentCommand\VAR{}{\Sym{q}}
\NewDocumentCommand\Ty{m}{\Sym{Ty}\parens{#1}}
\NewDocumentCommand\El{m m}{\Sym{El}\parens{#1\vdash #2}}
\NewDocumentCommand\Act{m m}{#2 \brackets{#1}}
\NewDocumentCommand\BANG{}{\boldsymbol{!}}
\NewDocumentCommand\Emp{}{\cdot}
\NewDocumentCommand\SNOC{}{\Sym{snoc}}
\NewDocumentCommand\Snoc{m m}{\langle{#1, #2}\rangle}
\NewDocumentCommand\LVL{}{\Sym{lvl}}
\NewDocumentCommand\LZ{}{\Sym{z}}
\NewDocumentCommand\LS{}{\Sym{s}}
\NewDocumentCommand\ITy{m m}{\TY_{#1}\parens{#2}}
\NewDocumentCommand\LT{}{\Sym{lt}}
\NewDocumentCommand\LTZ{}{\Sym{lt/z}}
\NewDocumentCommand\LTS{}{\Sym{lt/s}}
\NewDocumentCommand\LTC{}{\Sym{lt/lift}}
\NewDocumentCommand\LTCMP{}{\Sym{lt/cmp}}
\NewDocumentCommand\LIFT{}{\Sym{lift}}
\NewDocumentCommand\Lift{O{} O{} m}{{\Uparrow_{#1}^{#2}}#3}
\NewDocumentCommand\PI{}{\boldsymbol{\Pi}}
\NewDocumentCommand\LAM{}{\boldsymbol{\lambda}}
\NewDocumentCommand\APP{}{\Sym{app}}
\NewDocumentCommand\TyPi{m m}{\PI\parens{#1,#2}}
\NewDocumentCommand\App{m m}{\APP\parens{#1,#2}}
\NewDocumentCommand\Lam{m}{\LAM\parens{#1}}
\NewDocumentCommand\UNIV{}{\Sym{U}}
\NewDocumentCommand\Univ{m}{\UNIV_{#1}}
\NewDocumentCommand\Subject{O{\BooleanTrue} m}{
  \IfBooleanTF{#1}{{\color{DarkGreen} #2}}{#2}
}
\NewDocumentCommand\SORT{}{\mathit{sort}}
\NewDocumentCommand\GatJdg{m}{\mathrel{\vdash_{#1}}}
\NewDocumentCommand\IsThy{s m}{\Subject[#1]{#2}\ \mathit{thy}}
\NewDocumentCommand\IsTele{s O{\mathbb{T}} m}{\Subject[#1]{#3}\ \mathit{tele}_{#2}}
\NewDocumentCommand\IsSort{s O{\mathbb{T}} m m}{#3\GatJdg{#2} \Subject[#1]{#4}\ \SORT}
\NewDocumentCommand\EqSort{s O{\mathbb{T}} m m m}{#3\GatJdg{#2} #4 = #5\ \SORT}
\NewDocumentCommand\IsTerm{s O{\mathbb{T}} m m m}{#3\GatJdg{#2} \Subject[#1]{#4} : #5}
\NewDocumentCommand\IsSubst{s O{\mathbb{T}} m m m}{#3\GatJdg{#2} \Subject[#1]{#4} : #5}
\NewDocumentCommand\EqSubst{s O{\mathbb{T}} m m m m}{#3\GatJdg{#2} #4 = #5 : #6}
\NewDocumentCommand\EqTerm{s O{\mathbb{T}} m m m m}{#3\GatJdg{#2} #4 = #5 : #6}
\NewDocumentCommand\Cut{m m}{#1\langle{#2}\rangle}
\NewDocumentCommand\RuleSortDecl{O{} m m g}{
  \inferrule[#1]{
    #3
  }{
    \IfValueT{#4}{#4\ }\mathit{sort}
  }\ \tikz{\node[sort]{$#2$}}
}
\NewDocumentCommand\RuleOpDecl{O{} m m m g}{
  \inferrule[#1]{
    #3
  }{
    \IfValueT{#5}{#5\mathrel{:}}#4
  }
  \ \tikz{\node[opr]{$#2$}}
}
\NewDocumentCommand\RuleSortAxiom{O{} m m m}{
  \inferrule[#1]{
    #2
  }{
    #3 = #4\ \mathit{sort}
  }\ \tikz{\node[axiom]{$=$}}
}
\NewDocumentCommand\RuleTermAxiom{O{} m m m m}{
  \inferrule[#1]{
    #2
  }{
    #3 = #4 : #5
  }\ \tikz{\node[axiom]{$=$}}
}
\NewDocumentCommand\SortDecl{m}{\brackets*{#1\vdash\SORT}}
\NewDocumentCommand\OpDecl{m m}{\brackets*{#1\vdash #2}}
\NewDocumentCommand\SortAxiom{m m m}{\brackets*{#1 \vdash #2 = #3\ \SORT}}
\NewDocumentCommand\TermAxiom{m m m m}{\brackets*{#1 \vdash #2 = #3 : #4}}
\NewDocumentCommand\MSubst{m m}{#1^* #2}
\NewDocumentCommand\Dom{m}{\verts*{#1}}
\NewDocumentCommand\SigEmp{}{\varnothing}
\NewDocumentCommand\Id{}{\mathbf{id}}
\NewDocumentCommand\Sym{m}{\mathbf{#1}}
\NewDocumentCommand\GAT{}{\mathbf{GAT}}
\NewDocumentCommand\Alg{m}{#1\mbox{-}\mathbf{Alg}}
\NewDocumentCommand\OBS{}{\Sym{color}}
\NewDocumentCommand\RED{}{\Sym{red}}
\NewDocumentCommand\GREEN{}{\Sym{green}}
\NewDocumentCommand\CC{}{\mathcal{C}}
\NewDocumentCommand\Gl{m}{\overline{#1}}
\NewDocumentCommand\LFam{m}{#1^\bullet}
\NewDocumentCommand\Nat{}{\mathbb{N}}
\NewDocumentCommand\SemSort{O{}m g}{\llbracket{#2}\rrbracket_{#1}\IfValueT{#3}{\mkern-2mu\langle{#3}\rangle}}
\NewDocumentCommand\SemOp{O{}m g}{\llbracket{#2}\rrbracket_{#1}\IfValueT{#3}{\mkern-2mu\langle{#3}\rangle}}
\NewDocumentCommand\SemUniv{m}{\mathcal{V}_{#1}}
\NewDocumentCommand\Inl{m}{\mathbf{inl}\parens*{#1}}
\NewDocumentCommand\Inr{m}{\mathbf{inr}\parens*{#1}}
\NewDocumentCommand\HRule{}{%
  \nopagebreak[4]%
  \smallskip%
  \noindent\textcolor{Black!20}{\hrulefill}
}
\tikzset{
  background rectangle/.append style={color = Black!10, rounded corners = 2pt, fill = white},
  bend angle=45,
  presup/.style={rectangle, rounded corners=2pt, draw, signal to=south, draw=Black!30, fill = Black!2},
  jdg/.style={rectangle, rounded corners=2pt, draw, signal to=south, fill = Black!2},
  sort/.style={rectangle, rounded corners=2pt, draw, signal to=south, fill = RoyalBlue!10, text depth=0},
  opr/.style={rectangle, rounded corners=2pt, draw, signal to=south, fill = FireBrick!10, text depth=0},
  axiom/.style={rectangle, rounded corners=2pt, draw, signal to=south, fill = Gold!10, text depth=0},
}
\title{Algebraic Type Theory and Universe Hierarchies}
\author{
  Jonathan Sterling\thanks{\url{jmsterli@cs.cmu.edu}}
  \\
  {\small Carnegie Mellon University}
}
\begin{document}
\maketitle

\begin{abstract}

  It is commonly believed that algebraic notions of type theory support only
  universes \`a la Tarski, and that universes \`a la Russell must be removed by
  elaboration. We clarify the state of affairs, recalling the details of
  Cartmell's discipline of \emph{generalized algebraic theory}
  \citep{cartmell:1978}, showing how to formulate an algebraic version of
  Coquand's \emph{cumulative cwfs} with universes \`a la Russell.

  To demonstrate the power of algebraic techniques, we sketch a purely
  algebraic proof of canonicity for Martin-L\"of Type Theory with universes,
  dependent function types, and a base type with two constants.

\end{abstract}

\section{Generalized algebraic theories}

\citet{cartmell:1978} defines a notion of \emph{generalized algebraic theory},
generated by a collection of \emph{formation rules} for sort
symbols, \emph{introduction rules} for operation symbols, and axioms for both
sort equality and term equality.\footnote{\citet{taylor:1999} considers a
version of generalized algebraic theory which omits generating sort equations
(sort equality must still be considered as a consequence of sorts depending on
terms).} These are all inter-dependent, so care must be taken to stage the
construction properly.
In this note, we give a more streamlined presentation with a few differences from Cartmell's:
\begin{enumerate}

  \item At a superficial level, we use a modernized notation, inspired by
    logical frameworks.

  \item At a technical level, we stage the construction in such a way that,
    when considering the \emph{derived rules} available for a signature, we
    have already assumed that the signature is well-formed.

\end{enumerate}

It's worth noting that our formulation, while suitable for the development of
signatures which are finitary, is less general than the notion considered by
Cartmell; our version does not suffice to develop the universal algebra of
generalized algebraic theories (including the equivalence between the category
of such theories and the category of contextual categories), as noted by
\citet{taylor:1999}.

\subsection{Grammar and substitution}\label{sec:grammar}

In this section, we present an informal grammar to generate the raw syntax whose well-formedness we characterize in \cref{sec:gat-rules}.
\begin{grammar}
  judgments &
  \mathcal{J} &
  \!\!\!\begin{array}[t]{l}
    \IsThy{\mathbb{T}}
    \mid \IsTele{\Psi}
    \mid \IsSort{\Psi}{A}
    \mid \IsTerm{\Psi}{M}{A}
    \mid \IsSubst{\Phi}{\psi}{\Psi}
    \\
    \EqSort{\Psi}{A}{B}
    \mid \EqTerm{\Psi}{M}{N}{A}
    \mid \EqSubst{\Phi}{\psi_0}{\psi_1}{\Psi}
  \end{array}
  \\
  theories &
  \mathbb{T} &
    \SigEmp
    \mid \mathbb{T}, \vartheta:\mathcal{D}
    \mid \mathbb{T}, \mathcal{E}
  \\
  declarations &
  \mathcal{D} &
  \SortDecl{\Psi}
  \mid \OpDecl{\Psi}{A}
  \\
  axioms &
  \mathcal{E} &
  \SortAxiom{\Psi}{A}{B}
  \mid \TermAxiom{\Psi}{M}{N}{A}
  \\
  telescopes &
  \Psi &
  \cdot \mid \Psi, x : A
  \\
  sorts &
  A &
  \Cut{\vartheta}{\psi}
  \\
  terms &
  M &
  x \mid
  \Cut{\vartheta}{\psi}
  \\
  substitutions &
  \psi &
  \cdot \mid \psi, M/x
\end{grammar}

We define the action of raw substitutions on raw sorts and terms
$\MSubst{\psi}{(-)}$ and the composition of raw substitutions $\psi\circ\phi$
by recursion as follows:
\begin{gather*}
  \begin{aligned}
    \MSubst{\cdot}{x}
    &=
    x
    \\
    \MSubst{\parens*{\psi,M/x}}{x}
    &= x
    \\
    \MSubst{\parens*{\psi,M/y}}{x}
    &=
    \MSubst{\psi}{x}
    \\
    \MSubst{\psi}{
      \Cut{\vartheta}{\phi}
    }
    &=
    \Cut{\vartheta}{\phi\circ\psi}
  \end{aligned}
  \qquad
  \begin{aligned}
    \cdot\circ\psi &= \cdot
    \\
    \parens*{\phi,N/x}\circ\psi &=
    \parens*{\phi\circ\psi},\parens*{\MSubst{\psi}{N}}/x
  \end{aligned}
\end{gather*}

\subsection{Judgments and presuppositions}

We will define several forms of judgment simultaneously, to specify the
well-formedness and equality conditions for theories, sorts, terms and
substitutions. Following \citet{martin-lof:1996,schroeder-heister:1987}, we
explain a form of judgment by \emph{first} specifying its presuppositions (what
are the meaningful instances of the form of judgment?), and then giving rules
which specify when a meaningful instance of a form of judgment can be verified.

For instance, to specify a form of judgment like ``$M$ is of sort $A$'', we
first \emph{presuppose} that $A$ is already known to be a sort, but require of
$M$ only that it is generated from an appropriate production of the raw syntax
in \cref{sec:grammar}; then, rather than being false, the spurious instance
``$M$ is of sort $57$'' is actually not assigned a meaning at all. The
discipline of presuppositions enables us to omit many redundant premises from
rules.

\begin{convention}[Subjects and presuppositions]\label{convention:presupposition}

  In all cases that we will consider here, a form of judgment expresses that
  some piece of raw syntax (the \textbf{\textcolor{DarkGreen}{subject}}) is
  well-formed relative to some other objects which are already presupposed to
  be well-formed (the \textbf{parameters}); we write the subject in color to
  distinguish it visually from the parameters.
  We
  indicate this situation schematically for a form of judgment $\mathcal{J}$ in
  the following way:
  \[
    \tikz[framed,edge from parent fork up, level distance=2em, sibling distance = 7em]{
      \draw
        node[jdg] {$\mathcal{J}\parens*{\mathfrak{p}_0,\ldots,\mathfrak{p}_n,\Subject{\mathfrak{q}}}$} [grow'=up]
        child {
          node[presup] {$\mathcal{K}_0\parens*{\Subject{\mathfrak{p}_0}}$}
          child[dotted]
        }
        child{ node {\ldots} }
        child{
          node[presup] {$\mathcal{K}_n\parens*{\Subject{\mathfrak{p}_n}}$}
          child[dotted]
        }
        node[below=1em] {\textit{``Pronunciation of $\mathcal{J}\parens{\mathfrak{p}_0,\ldots,\mathfrak{p}_n,\Subject{\mathfrak{q}}}$''}}
    }
  \]

  The above schema should be read as asserting that the judgment
  $\mathcal{J}\parens*{\mathfrak{p}_0,\ldots,\mathfrak{p}_n,\Subject{\mathfrak{q}}}$
  presupposes $\mathcal{K}_0\parens*{\Subject{\mathfrak{p}_0}}$ through
  $\mathcal{K}_n\parens*{\Subject{\mathfrak{p}}_n}$, transitively presupposing whatever is
  presupposed by $\mathcal{K}_i\parens*{\Subject{\mathfrak{p}}_i}$, establishing the
  well-formedness of the raw syntax $\Subject{\mathfrak{q}}$.

\end{convention}

Once the meaningful instances of a form of judgment have been generated
schematically as in \cref{convention:presupposition}, its
\emph{correct} instances can be characterized inductively by rules of
inference.

\paragraph{Relation to traditional presentations}

The ``traditional'' presentation of rules of inference, in which \emph{all}
constituents are treated as subjects and inference rules are equipped with
extra premises to govern their well-formedness, can be obtained in a
completely mechanical way from the more streamlined systems presented here.

A form of judgment
$\mathcal{J}\parens*{\vec{\mathfrak{p}},\Subject{\mathfrak{q}}}$ should be
thought of as a family of sets of derivations indexed in
$\braces{\parens{\vec{\mathfrak{p}},\Subject{\mathfrak{q}}}\mid
\etc{\mathcal{K}\parens*{\Subject{\mathfrak{p}}}}}$. The act of forgetting the
well-formedness of the parameters induces a contravariant restriction of forms
of judgment in an adjoint situation:
\begin{diagram}
  \braces{\parens{\vec{\mathfrak{p}},\Subject{\mathfrak{q}}}\mid\etc{\mathcal{K}\parens*{\Subject{\mathfrak{p}}}}}
  &
  \mathbf{Jdg}\parens{\braces{\parens{\vec{\Subject{\mathfrak{p}}}, \Subject{\mathfrak{q}}}}}
  \\
  \dTo_i
  &
  \uTo^{\exists_i}\dashv\dTo~{i^*}\dashv\uTo_{\forall_i}
  \\
  \braces{\parens{\vec{\Subject{\mathfrak{p}}},\Subject{\mathfrak{q}}}}
  &
  \mathbf{Jdg}\parens{
    \braces{
      \parens{\vec{\mathfrak{p}},\Subject{\mathfrak{q}}}
      \mid\etc{\mathcal{K}\parens*{\Subject{\mathfrak{p}}}}
    }
  }
\end{diagram}

From the left adjoint $\exists_i$, we obtain exactly the system of judgments
and rules without presuppositions, in which the old presuppositions are added
as auxiliary premises to every rule in just the right place.

\subsection{Rules for generalized algebraic theories}\label{sec:gat-rules}

\begin{mathpar}
  \tikz[framed,edge from parent fork up, level distance=2em, sibling distance = 7em]{
    \draw
      node[jdg] {$\IsThy*{\mathbb{T}}$}
      node[below=1em] {\textit{``$\Subject{\mathbb{T}}$ is a theory''}}
  }
  \\
  \inferrule[empty]{
  }{
    \IsThy*{\SigEmp}
  }
  \and
  \inferrule[sort decl]{
    \IsThy*{\mathbb{T}}
    \\
    \IsTele*[\mathbb{T}]{\Psi}
    \\
    \parens*{\vartheta\not\in\mathbb{T}}
  }{
    \IsThy*{\mathbb{T}, \vartheta : \SortDecl{\Psi}}
  }
  \and
  \inferrule[operation decl]{
    \IsThy*{\mathbb{T}}
    \\
    \IsTele*[\mathbb{T}]{\Psi}
    \\
    \IsSort*[\mathbb{T}]{\Psi}{A}
    \\
    \parens*{\vartheta\not\in\mathbb{T}}
  }{
    \IsThy*{\mathbb{T}, \vartheta:\OpDecl{\Psi}{A}}
  }
  \and
  \inferrule[sort axiom]{
    \IsThy*{\mathbb{T}}
    \\
    \IsTele*[\mathbb{T}]{\Psi}
    \\
    \IsSort*[\mathbb{T}]{\Psi}{A_0}
    \\
    \IsSort*[\mathbb{T}]{\Psi}{A_1}
  }{
    \IsThy*{\mathbb{T},\SortAxiom{\Psi}{A_0}{A_1}}
  }
  \and
  \inferrule[term axiom]{
    \IsThy*{\mathbb{T}}
    \\
    \IsTele*[\mathbb{T}]{\Psi}
    \\
    \IsSort*[\mathbb{T}]{\Psi}{A}
    \\
    \IsTerm*[\mathbb{T}]{\Psi}{M_0}{A}
    \\
    \IsTerm*[\mathbb{T}]{\Psi}{M_1}{A}
  }{
    \IsThy*{\mathbb{T},\TermAxiom{\Psi}{M_0}{M_1}{A}}
  }
\end{mathpar}

\HRule

\begin{mathpar}
  \tikz[framed,edge from parent fork up, level distance=2em, sibling distance = 7em]{
    \draw
      node[jdg] {$\IsTele*{\Psi}$} [grow'=up]
      child {
        node[presup] { $\IsThy*{\mathbb{T}}$ }
      }
      node[below=1em] {\textit{``$\Subject{\Psi}$ is a telescope for theory $\mathbb{T}$''}}
  }
  \\
  \inferrule[empty]{
  }{
    \IsTele*{\cdot}
  }
  \and
  \inferrule[snoc]{
    \IsTele*{\Psi}
    \\
    \IsSort*{\Psi}{A}
    \\
    \parens*{x\not\in\Psi}
  }{
    \IsTele*{\Psi,x:A}
  }
\end{mathpar}

\HRule

\begin{mathpar}
  \tikz[framed,edge from parent fork up, level distance=2em, sibling distance = 7em]{
    \draw
      node[jdg] {$\IsSort*{\Psi}{A}$} [grow'=up]
      child {
        node[presup] { $\IsTele*{\Psi}$ }
        child[dotted] {}
      }
      node[below=1em] {\textit{``$\Subject{A}$ is a sort in context $\Psi$''}}
  }
  \\
  \inferrule[sort formation]{
    \mathbb{T}\ni\vartheta : \SortDecl{\Phi}
    \\
    \IsSubst*[\mathbb{T}]{\Psi}{\varphi}{\Phi}
  }{
    \IsSort*[\mathbb{T}]{\Psi}{\Cut{\vartheta}{\varphi}}
  }
\end{mathpar}

\HRule

\begin{mathpar}
  \tikz[framed,edge from parent fork up, level distance=2em, sibling distance = 7em]{
    \draw
      node[jdg] {$\EqSort*{\Psi}{A}{B}$} [grow'=up]
      child {
        node[presup] { $\IsSort*{\Psi}{A}$ }
        child[dotted] {}
      }
      child {
        node[presup] { $\IsSort*{\Psi}{B}$ }
        child[dotted] {}
      }
      node[below=1em] {\textit{``$A$ and $B$ are equal sorts''}}
  }
  \\
  \inferrule[reflexivity]{
  }{
    \EqSort*{\Psi}{A}{A}
  }
  \and
  \inferrule[symmetry]{
    \EqSort*{\Psi}{A_0}{A_1}
  }{
    \EqSort*{\Psi}{A_1}{A_0}
  }
  \and
  \inferrule[transitivity]{
    \EqSort*{\Psi}{A_0}{A_1}
    \\
    \EqSort*{\Psi}{A_1}{A_2}
  }{
    \EqSort*{\Psi}{A_0}{A_2}
  }
  \and
  \inferrule[substitution]{
    \EqSort*{\Phi}{A_0}{A_1}
    \\
    \EqSubst*{\Psi}{\phi_0}{\phi_1}{\Phi}
  }{
    \EqSort*{\Psi}{
      \MSubst{\phi_0}{A_0}
    }{
      \MSubst{\phi_1}{A_1}
    }
  }
  \and
  \inferrule[sort axiom]{
    \mathbb{T}\ni\SortAxiom{\Psi}{A_0}{A_1}
  }{
    \EqSort*[\mathbb{T}]{\Psi}{A_0}{A_1}
  }
\end{mathpar}

\HRule

\begin{mathpar}
  \tikz[framed,edge from parent fork up, level distance=2em, sibling distance = 7em]{
    \draw
      node[jdg] {$\IsTerm*{\Psi}{M}{A}$} [grow'=up]
      child {
        node[presup] { $\IsSort*{\Psi}{A}$ }
        child[dotted] {}
      }
      node[below=1em] {\textit{``$\Subject{M}$ is a term of sort $A$''}}
  }
  \\
  \inferrule[term formation]{
    \mathbb{T}\ni\vartheta:\OpDecl{\Phi}{A}
    \\
    \IsSubst*[\mathbb{T}]{\Psi}{\varphi}{\Phi}
  }{
    \IsTerm*[\mathbb{T}]{\Psi}{\Cut{\vartheta}{\varphi}}{\MSubst{\phi}{A}}
  }
  \and
  \inferrule[variable]{
  }{
    \IsTerm*{\Psi, x : A}{x}{A}
  }
  \and
  \inferrule[conversion]{
    \EqSort*{\Psi}{A_0}{A_1}
    \\
    \IsTerm*{\Psi}{M}{A_0}
  }{
    \IsTerm*{\Psi}{M}{A_1}
  }
\end{mathpar}

\HRule

\begin{mathpar}
  \tikz[framed,edge from parent fork up, level distance=2em, sibling distance = 7em]{
    \draw
      node[jdg] {$\EqTerm*{\Psi}{M}{N}{A}$} [grow'=up]
      child {
        node[presup] { $\IsTerm*{\Psi}{M}{A}$ }
        child[dotted] {}
      }
      child {
        node[presup] { $\IsTerm*{\Psi}{N}{A}$ }
        child[dotted] {}
      }
      node[below=1em] {\textit{``$M$ and $N$ are equal terms''}}
  }
  \\
  \inferrule[reflexivity]{
  }{
    \EqTerm*{\Psi}{M}{M}{A}
  }
  \and
  \inferrule[symmetry]{
    \EqTerm*{\Psi}{M_0}{M_1}{A}
  }{
    \EqTerm*{\Psi}{M_1}{M_0}{A}
  }
  \and
  \inferrule[transitivity]{
    \EqTerm*{\Psi}{M_0}{M_1}{A}
    \\
    \EqTerm*{\Psi}{M_1}{M_2}{A}
  }{
    \EqTerm*{\Psi}{M_0}{M_2}{A}
  }
  \and
  \inferrule[conversion]{
    \EqTerm*{\Psi}{M_0}{M_1}{A_0}
    \\
    \EqSort*{\Psi}{A_0}{A_1}
  }{
    \EqTerm*{\Psi}{M_0}{M_1}{A_1}
  }
  \and
  \inferrule[substitution]{
    \EqTerm*{\Phi}{M_0}{M_1}{A}
    \\
    \EqSubst*{\Psi}{\phi_0}{\phi_1}{\Phi}
  }{
    \EqTerm*{\Psi}{
      \MSubst{\phi_0}{M_0}
    }{
      \MSubst{\phi_1}{M_1}
    }{
      \MSubst{\phi_0}{A}
    }
  }
  \and
  \inferrule[term axiom]{
    \mathbb{T}\ni\TermAxiom{\Psi}{M_0}{M_1}{A}
  }{
    \EqTerm*[\mathbb{T}]{\Psi}{M_0}{M_1}{A}
  }
\end{mathpar}

\HRule

\begin{mathpar}
  \tikz[framed,edge from parent fork up, level distance=2em, sibling distance = 7em]{
    \draw
      node[jdg] {$\IsSubst*{\Phi}{\psi}{\Psi}$} [grow'=up]
      child {
        node[presup] { $\IsTele*{\Phi}$ }
        child[dotted] {}
      }
      child {
        node[presup] { $\IsTele*{\Psi}$ }
        child[dotted] {}
      }
      node[below=1em] {\textit{``$\Subject{\psi}$ is a substitution from $\Phi$ to $\Psi$''}}
  }
  \\
  \inferrule[empty]{
  }{
    \IsSubst*{\Phi}{\cdot}{\cdot}
  }
  \and
  \inferrule[snoc]{
    \IsSubst*{\Phi}{\psi}{\Psi}
    \\
    \IsTerm*{\Phi}{M}{\MSubst{\psi}{A}}
  }{
    \IsSubst*{\Phi}{\parens*{\psi, M/x}}{\Psi,x:A}
  }
\end{mathpar}

\HRule

\begin{mathpar}
  \tikz[framed,edge from parent fork up, level distance=2em, sibling distance = 7em]{
    \draw
      node[jdg] {$\EqSubst*{\Phi}{\psi_0}{\psi_1}{\Psi}$} [grow'=up]
      child {
        node[presup] { $\IsSubst*{\Phi}{\psi_0}{\Psi}$ }
        child[dotted] {}
      }
      child {
        node[presup] { $\IsSubst*{\Psi}{\psi_1}{\Psi}$ }
        child[dotted] {}
      }
      node[below=1em] {\textit{``$\psi_0$ and $\psi_1$ are equal substitutions''}}
  }
  \\
  \inferrule[empty]{
  }{
    \EqSubst*{\Phi}{\cdot}{\cdot}{\cdot}
  }
  \and
  \inferrule[snoc]{
    \EqSubst*{\Phi}{\psi_0}{\psi_1}{\Psi}
    \\
    \EqTerm*{\Phi}{M_0}{M_1}{\Act{\psi_0}{A}}
  }{
    \EqSubst*{\Phi}{\parens*{\psi_0,M_0/x}}{\parens*{\psi_1,M_1/x}}{\Psi,x:A}
  }
\end{mathpar}

\HRule

\begin{notation}[Substitution]
  Abusing notation, we will often write $\psi,M$ instead of $\psi,M/x$ when the
  variable is clear from context; we will also routinely write $M$ instead of
  $\cdot,M$.
\end{notation}

\begin{lemma}[Substitution]
  When $\IsSubst*{\Phi}{\psi}{\Psi}$, we have the following admissible substitution principles:
  \begin{enumerate}
    \item If $\IsSort*{\Psi}{A}$, then $\IsSort*{\Phi}{\MSubst{\psi}{A}}$.
    \item If $\IsTerm*{\Psi}{M}{A}$, then $\IsTerm*{\Phi}{\MSubst{\psi}{M}}{\MSubst{\psi}{A}}$.
    \item If $\IsSubst*{\Psi}{\xi}{\Xi}$, then $\IsSubst*{\Phi}{\xi\circ\psi}{\Xi}$.
  \end{enumerate}
\end{lemma}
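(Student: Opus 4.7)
The plan is to proceed by a single simultaneous induction on the derivations of $\IsSort{\Psi}{A}$, $\IsTerm{\Psi}{M}{A}$, and $\IsSubst{\Psi}{\xi}{\Xi}$, parameterized in $\IsSubst{\Phi}{\psi}{\Psi}$. Before beginning, I would establish two structural facts about the raw syntax defined in \cref{sec:grammar}: first, the associativity equation $\MSubst{\psi}{\MSubst{\varphi}{e}} = \MSubst{\varphi\circ\psi}{e}$; and second, a stability (``strengthening'') equation stating that $\MSubst{(\psi', N/x)}{e} = \MSubst{\psi'}{e}$ whenever $x$ does not appear freely in $e$. Both are proved by simple structural induction on raw syntax. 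I would also derive, as a preliminary admissible rule, the reflexivity of substitution equality---$\EqSubst{\Phi}{\psi}{\psi}{\Psi}$ from $\IsSubst{\Phi}{\psi}{\Psi}$---by induction on $\psi$, using reflexivity of term equality at each snoc.

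With these preliminaries in place, the main induction is straightforward. Claim (1) has only the \textsc{sort formation} production, handled by applying the IH for claim (3) to the argument substitution; the conclusion $\Cut{\vartheta}{\varphi\circ\psi}$ is exactly what raw substitution computes. Claim (3) is handled case by case: the empty case is trivial, and the snoc case follows from the IH for claim (3) on the tail, the IH for claim (2) on the head, and associativity to re-align the sort annotation. For claim (2), the \textsc{term formation} case mirrors the sort case, with associativity reconciling $\MSubst{\psi}{\MSubst{\varphi}{A}}$ with $\MSubst{\varphi\circ\psi}{A}$; the \textsc{variable} case proceeds by inversion on $\IsSubst{\Phi}{\psi}{\Psi, x:A}$, yielding $\psi = (\psi', N/x)$, and uses stability to identify $\MSubst{\psi}{A}$ with $\MSubst{\psi'}{A}$; and the \textsc{conversion} case is discharged by one application of the \textsc{substitution} rule for sort equality together with reflexivity of substitution equality.

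The main obstacle I anticipate is the interplay between substitution and the equality judgments in the \textsc{conversion} case of claim (2). A naive attempt to prove substitution for equality by a direct induction on equality derivations would force us to extend the simultaneous induction substantially, but this can be sidestepped: by invoking the already-present \textsc{substitution} rule for sort equality in a single step, the work reduces to the reflexivity of substitution equality, which was established as a preliminary lemma. A secondary bit of housekeeping arises in the \textsc{variable} case, where one must observe that the sort $A$ of the most recently bound variable $x$ is well-formed in $\Psi$ rather than $\Psi, x:A$, so that the stability lemma applies; this is guaranteed by the freshness side-condition on the \textsc{snoc} rule for telescopes.
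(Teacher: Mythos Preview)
Your proposal is correct and follows the same approach as the paper, which records the entire proof as the single line ``By mutual induction on derivations.'' Your version simply fills in the details the paper omits; in particular, your observation that the \textsc{conversion} case can be discharged by appealing to the \emph{primitive} \textsc{substitution} rule for sort equality (together with reflexivity of substitution equality) rather than by extending the mutual induction to the equality judgments is exactly the kind of economy the paper's one-line proof presumes.
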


\begin{proof}
  By mutual induction on derivations.
\end{proof}

\subsection{Notation for theories}

Based on the rules and grammar that we have given, a generalized algebraic
theory is defined by a sequence of declarations of sort symbols and operation
symbols and their arities, with equational axioms interspersed, subject to the
sorting discipline. For instance, the generalized algebraic theory of monoids
is given as follows:
\begin{align*}
  &\Sym{ob}:\SortDecl{\cdot},
  \\
  &\Sym{id}:\OpDecl{\cdot}{\Cut{\Sym{ob}}{\cdot}},
  \\
  &\Sym{op}:\OpDecl{
    x : \Cut{\Sym{ob}}{\cdot},
    y : \Cut{\Sym{ob}}{\cdot}
  }{
    \Cut{\Sym{ob}}{\cdot}
  },
  \\
  &\TermAxiom{
    x : \Cut{\Sym{ob}}{\cdot}
  }{
    \Cut{\Sym{op}}{x,\Cut{\Sym{id}}{\cdot}}
  }{x}{
    \Cut{\Sym{ob}}{\cdot}
  },
  \\
  &\TermAxiom{
   x : \Cut{\Sym{ob}}{\cdot}
  }{
    \Cut{\Sym{op}}{\Cut{\Sym{id}}{\cdot},x}
  }{x}{
    \Cut{\Sym{ob}}{\cdot}
  },
  \\
  &\TermAxiom{
    x : \Cut{\Sym{ob}}{\cdot},y:\Cut{\Sym{ob}}{\cdot},z:\Cut{\Sym{ob}}{\cdot}
  }{
    \Cut{\Sym{op}}{\Cut{\Sym{op}}{x,y},z}
  }{
    \Cut{\Sym{op}}{x,\Cut{\Sym{op}}{y,z}}
  }{
    \Cut{\Sym{ob}}{\cdot}
  }
\end{align*}

However, for more complex theories, this linear notation will be a hindrance;
therefore, we will impose an inference-style notation which will be more
ergonomic. In our informal notation, a formation rule for a sort or operation
symbol will simultaneously extend the signature with the appropriate declaration, and
impose an informal \emph{notation} for its use. It is crucial to note that
these notations are just that: they are not part of the formal theory, but
instead part of the informal way that we render the real objects of the theory
(concrete trees) into linear text.

\paragraph{Sort declaration}

The sort formation rule
$
  \RuleSortDecl{\vartheta}{
    x_0 : A_0
    \quad
    \ldots
    \quad
    x_n : A_n
  }{\square_{x_0\ldots x_n}}
$
extends the signature by the declaration
$
  \vartheta:\SortDecl{
    x_0:A_0,\ldots,x_n:A_n
  }
$, and imposes the notational convention $\square_{x_0\ldots x_n}\triangleq \Cut{\vartheta}{x_0,\ldots,x_n}$.

These notational conventions are permitted to omit arguments which are obvious
from context, giving an informal counterpart to the notion of \emph{implicit
arguments} which appear in proof assistants like Agda~\citep{norell:2009}.

\paragraph{Operation declaration}
The operation formation rule
$
  \RuleOpDecl{\vartheta}{
    x_0 : A_0
    \quad
    \ldots
    \quad
    x_n : A_n
  }{B}{\square_{x_0\ldots x_n}}
$
extends the signature by the declaration
$\vartheta:\OpDecl{x_0:A_0,\ldots,x_n:A_n}{B}$, and imposes the notational
convention $\square_{x_0\ldots x_n} \triangleq\Cut{\vartheta}{x_0\ldots x_n}$.

\paragraph{Sort axiom}
The sort equation rule
$
  \RuleSortAxiom{
    x_0 : A_0
    \quad
    \ldots
    \quad
    x_n:A_n
  }{A}{B}
$
extends the signature by the axiom $\SortAxiom{x_0:A_0,\ldots,x_n:A_n}{A}{B}$.

\paragraph{Term axiom}
The term equation rule
$
  \RuleTermAxiom{
    x_0 : A_0
    \quad
    \ldots
    \quad
    x_n:A_n
  }{M}{N}{A}
$
extends the signature by the axiom $\TermAxiom{x_0:A_0,\ldots,x_n:A_n}{M}{N}{A}$.

\paragraph{Example} The theory of monoids can be written using our new notation
as follows:
\begin{mathpar}
  \RuleSortDecl{\Sym{ob}}{}{\Sym{ob}}
  \and
  \RuleOpDecl{\Sym{id}}{}{\Sym{ob}}{\Sym{id}}
  \and
  \RuleOpDecl{\Sym{cmp}}{
    x : \Sym{ob}
    \\
    y : \Sym{ob}
  }{
    \Sym{ob}
  }{x\bullet y}
  \and
  \RuleTermAxiom{
    x : \Sym{ob}
  }{
    x\bullet \Sym{id}
  }{x}{\Sym{ob}}
  \and
  \RuleTermAxiom{
    x : \Sym{ob}
  }{
    \Sym{id}\bullet x
  }{x}{\Sym{ob}}
  \and
  \RuleTermAxiom{
    x : \Sym{ob}
    \\
    y : \Sym{ob}
    \\
    z : \Sym{ob}
  }{
    \parens*{x\bullet y}\bullet z
  }{
    x \bullet\parens*{y \bullet z}
  }{
    \Sym{ob}
  }
\end{mathpar}

\subsection{Related work: logical frameworks}\label{sec:logical-frameworks}

Generalized algebraic theories comprise one point in the space of \emph{logical
frameworks}, which are syntactic disciplines for formulating deductive systems.
The purpose of a logical framework is to distinguish between the parts of a
deductive system which are \emph{particular} (for instance, the generators and
equations) and the parts which are \emph{universal} (for instance, the typing
or binding discipline). Logical frameworks vary primarily in which aspects of deductive
systems they treat as universal, negotiating the duality between
\emph{expressivity} and \emph{utility}.

\subsubsection{First-order algebraic theories}

One of the most basic logical frameworks is that of \emph{first-order algebra},
in which there are only atomic sorts, and contexts have the structure of a
strictly associative cartesian product. The science of functorial semantics was
developed first in the context of unisorted first-order algebraic theories by
\citet{lawvere:2004} in 1968.

Like generalized algebraic theories, first-order algebraic theories lack any
intrinsic binding structure. As such, they do not \emph{natively} explain the
universal syntactic phenomena which emanate from the lambda calculus and other
languages with binding.  In contrast, the dependent sorts of generalized
algebraic theories lend themselves to a workable formalization of De Bruijn
indices and explicit substitutions, which we have employed here.

\subsubsection{Second-order algebraic theories}

\citet{fiore-plotkin-turi:1999} initiated the scientific study of
\emph{second-order algebra} --- which had already appeared in embryonic form as
early as \citet{aczel:1978} --- a discipline which encompasses theories whose
operations exhibit binding structure of one level, with variables that range
over terms and metavariables which range over binders. The functorial semantics
of second-order algebraic theories was developed by \citet{fiore-mahmoud:2010}.
A variation on second-order algebra which omits both equations and
metavariables, but adds a novel notion of \emph{indexed operation}, was
employed by \citet{harper:2012:pfpl} to provide a syntactic discipline for
formulating the syntax and semantics of programming languages.

\subsubsection{Essentially algebraic theories}

Essentially algebraic theories are the closest (semantic) relative to
Cartmell's generalized algebraic theories. Generalized algebraic theories
provide dependently-sorted syntax for concepts which exhibit indexing;
essentially algebraic theories capture these concepts using \emph{fibration}
rather than \emph{parameterization}.

Every essentially algebraic theory can be presented as a generalized algebraic
theory by axiomatizing proof-irrelevant predicates; generalized algebraic
theories can likewise be transformed into essentially algebraic theories by
taking the ``total sorts'' of families and using predicates to specify indices.
When transforming an essentially algebraic theory into a generalized algebraic
theory, one must choose which relations to express using parameterization and
which to express using fibration. For this reason, Voevodsky observed that it
is not correct to consider the two disciplines
interchangeable~\citep{voevodsky:2013:tts}.

\subsubsection{Martin-L\"of's Logical Framework}

In \citet{martin-lof:1984}, the use of ``higher-level variables'' (variables
which range over binders) was introduced to the syntax of Intuitionistic Type
Theory, a syntactic discipline called a \emph{theory of expressions}
in~\citet{nordstrom-peterson-smith:1990}. This simply-typed higher-order
logical framework was employed to systematize the treatment of variable binding
in Intuitionistic Type Theory, which maintained at the time a separate
\emph{extrinsic} typing discipline which was defined on top of the simple
arities.

The theory of expressions was an immediate precursor to dependently typed
logical frameworks, principally Martin-L\"of's Logical Framework (LF) and the
Edinburgh Logical Framework (ELF). Object theories formulated in the LF use the
LF type structure to simultaneously express both their binding structure and
their typing discipline: constants are added to a signature together with an
LF-type, employing the ambient dependent function type to achieve both binding
(of higher level) and parameterization. Signatures in the LF can be extended
with equations, analogous to the state of affairs in generalized algebraic
theories.

\paragraph{Variable binding in LF}

Because object theories formulated in the LF inherit their binding discipline
from the metalanguage, there is likewise no need to formalize contexts. Whereas
in the generalized algebraic theory of categories with families (cwfs) which we
recapitulate in \cref{sec:cwf}, we formalize the notion of a context,
and then every operator takes as an argument its context $\Gamma$ in addition
to its other parameters, this part of the struture is implicit in the LF. A
consequence is, however, that the LF must be revised or extended in order to
support languages with exotic binding constructs and context effects, such as
modalities.

\subsubsection{Edinburgh Logical Framework}

Introduced by \citet{harper-honsell-plotkin:1993}, the Edinburgh Logical
Framework (ELF) shares its type structure with Martin-L\"of's Logical
Framework, but its notion of signature is different, and therefore its mode of
use also differs. Whereas the LF allowed a signature to be extended with
equations as is customary in algebra, the ELF was designed in order to enable a
strict bijection (called \emph{adequacy}) between ELF terms and object-theory
derivations, \emph{including} derivations of formal equality.

For this reason, while Martin-L\"of's Logical Framework is most suited to
providing a syntactic discipline for object theories which encompasses both
typing and formal equality, the Edinburgh Logical Framework is better adapted
to situations in which one wishes to prove a syntactic metatheorem about a
formal system by induction on its derivations.  The Edinburgh Logical
Framework, implemented in the Twelf proof
assistant~\citep{pfenning-schuermann:1999}, has been used to formalize the
syntactic metatheory of numerous logics and even programming
languages~\citep{lee-crary-harper:2007,harper-licata:2007}.

\subsubsection{Perspective}

The main axes of variation in logical frameworks are to be found in negotiating
the universality of \emph{type structure}, \emph{binding structure} and
\emph{formal equality}. In LF and ELF, the question of binding structure is
essentially subsumed by the type structure, which is treated as universal; in
LF, formal equality is treated as universal, whereas in ELF it is treated as
particular. Generalized algebraic theories represent a middle ground, in which
dependently-sorted first-order syntax can be formalized up to formal equality,
\emph{including} the theory of De Bruijn indices and explicit substitutions.

While such an object-level formalization of binding structure can prove
tedious, and necessarily results in a proliferation of axioms about how
substitutions propagate through constructors, it is the most natural setting in
which to develop an algebraic account of categories of models of type theory,
such as categories with families or categories with attributes. The initial
category with families (extended with further structure, such as dependent
function types), then, serves as a suitable notion of type theory; indeed, the
Logical Framework itself arises in this way.

The Logical Framework may be the most natural place to develop many object
languages, but we have found generalized algebraic theories to be a useful
intermediate point worth developing, if only to have a principled matrix in
which to construct the next thousand logical frameworks.

\subsection{Categories of signatures vs \emph{doctrines}}

In modern algebra, one considers two parallel perspectives on ``notion of
theory'':

\begin{enumerate}

  \item There is the 1-categorical perspective, in which a notion of theory
    is given by (something equivalent to) a 1-category of signatures and
    interpretations. This is the perspective that we have followed in
    \cref{sec:logical-frameworks}, and indeed, in the rest of this note.

  \item There is the 2-categorical perspective, in which a notion of theory is
    given by 2-category of theories (called a \emph{doctrine}). This usually
    arises from a \emph{universal characterization} of the basic structure
    (like finite products, finite limits, etc.), as opposed to a
    choice of basic structure.

\end{enumerate}

For instance, the 2-categorical perspective on algebraic theories arises from
the doctrine of categories with finite products and functors which preserve
finite products. On the other hand, one can define the 1-category of
first-order algebraic theories as either the category of algebraic signatures
and interpretations $\mathbf{Sign}_\times$, or the 1-category of categories
equipped with a choice of strictly-associative finite products $\mathbf{Law}$.
Both $\mathbf{Sign}_\times$ and $\mathbf{Law}$ are equivalent; while
$\mathbf{Law}$ looks as though it might extend to a 2-categorical notion
(because its objects are categories), it is fundamentally 1-categorical
because the only sensible class of 2-cells would contain only
identities~\citep{hyland-power:2007}.

\citet{cartmell:1978} treats generalized algebraic theories from a
purely 1-categorical perspective. Theories $\mathbb{T}$ are arranged into a
1-category $\GAT$, with morphisms $\mathbb{T}\to\mathbb{T}'$ given by
interpretations of the language of $\mathbb{T}$ into the language of
$\mathbb{T}'$; another 1-category of \emph{contextual categories} $\mathbf{Con}$
 is defined which plays exactly the role of
$\mathbf{Law}$ in relation to $\GAT$, which is to forget the difference between
derived and generating morphisms.

The 2-categorical perspective given by doctrines is essential for general
semantics, but our immediate aim is more restricted: we are using semantic
tools to prove theorems about syntax (such as canonicity, normalization,
coherence, decidability of equality, etc.); toward these ends, the
1-categorical perspective is the simplest and most immediately adaptable, and
we require only 1-categorical initiality results.

In contrast, to prove the equivalence of categories with families and locally
cartesian closed categories, one must develop the 2-categorical notion of cwf
rather than the 1-categorical notion that we develop here (i.e. the category of
models of the theory of cwfs); one does not obtain, for instance, a free
locally cartesian closed category from the initial model of the theory of
cwfs~\citep{castellan-clairambault-dybjer:2017}.

\subsection{Algebraic semantics and initiality}\label{sec:initiality}

Every generalized algebraic theory $\mathbb{T}$ gives rise to a category of models
$\Alg{\mathbb{T}}$; concretely, a model of $\mathbb{T}$ is given by an interpretation of
sorts and operations in families of sets. A sort ``$\IsSort*{\cdot}{A}$'' is
interpreted as a set $\Sem{A}$, whereas a sort ``$\IsSort*{x:A}{B}$'' is interpreted
as a $\Sem{A}$-indexed family of sets $\Sem{B}$, and so on.

Every interpretation $I:\mathbb{T}\to\mathbb{T}'$ in $\GAT$ induces a
restriction of algebras $I^* :\Alg{\mathbb{T}'}\to\Alg{\mathbb{T}}$
(precomposition with the interpretation), and this has a left adjoint $I_!$.
Considering the universal interpretation $!_{\mathbb{T}}:\SigEmp\to\mathbb{T}$,
we observe that the codomain $\Alg{\SigEmp}$ of its restriction functor is
actually the terminal category: there is only one model of the theory with no
sorts and no operations; from the left adjoint, we therefore obtain an initial
object in $\Alg{\mathbb{T}}$, i.e.\ an initial model.

The construction of the left adjoint involves adjoining new rules to the
\emph{term model} (Lindenbaum-Tarski model) of the given theory;
\citet{cartmell:1978} constructs this term model in painstaking detail, but
merely observes without proof that the left adjoint to the evident restriction
functor exists.

The existence of these left adjoints is proved in more detail in the context of
\emph{essentially algebraic theories} by \citet{palmgren-vickers:2007}. We are
not aware of a similarly detailed proof for generalized algebraic theories in
the literature.  If one is unsatisfied with this state of affairs, one can
observe that every generalized algebraic theory induces an essentially
algebraic theory with an equivalent category of models, and then transport the
initial object along this equivalence.

In recent work, \citet{kaposi-kovacs-altenkirch:2019} present a more modern
account of generalized algebraic theories in terms of \emph{finitary quotient
inductive types}, achieving a more crisp construction of initial algebras. As
with the presentation of generalized algebraic theories in \citet{taylor:1999}
(in contrast to \citeauthor{cartmell:1978}), they do not allow equations on
sorts. Lacking sort equations, we cannot reproduce the encodings in this paper
in an identical way, but we can recover the essence by encoding the theory of
GATs with sort equations as a GAT without sort equations.

\section{Warming up: the theory of categories}

We define the generalized algebraic theory of categories, $\IsThy{\SigCAT}$.
\begin{mathparpagebreakable}
  \RuleSortDecl{\OB}{}
  \and
  \RuleSortDecl{\HOM}{
    \Delta:\OB
    \\
    \Gamma:\OB
  }{
    \Hom{\Delta}{\Gamma}
  }
  \and
  \RuleOpDecl{\HOMID}{
    \Gamma:\OB
  }{
    \Hom{\Gamma}{\Gamma}
  }{
    \HomId
  }
  \and
  \RuleOpDecl{\HOMCMP}{
    \Eta,\Delta,\Gamma:\OB
    \\
    \gamma:\Hom{\Delta}{\Gamma}
    \\
    \delta:\Hom{\Eta}{\Delta}
  }{
    \Hom{\Eta}{\Gamma}
  }{
    \HomCmp{\gamma}{\delta}
  }
  \and
  \RuleTermAxiom{
    \Delta,\Gamma:\OB
    \\
    \gamma : \Hom{\Delta}{\Gamma}
  }{
    \HomCmp{\gamma}{\HomId}
  }{
    \gamma
  }{
    \Hom{\Delta}{\Gamma}
  }
  \and
  \RuleTermAxiom{
    \Delta,\Gamma:\OB
    \\
    \gamma : \Hom{\Delta}{\Gamma}
  }{
    \HomCmp{\HomId}{\gamma}
  }{
    \gamma
  }{
    \Hom{\Delta}{\Gamma}
  }
  \and
  \RuleTermAxiom{
    \Xi,\Eta,\Delta,\Gamma:\OB
    \\
    \eta : \Hom{\Xi}{\Eta}
    \\
    \delta : \Hom{\Eta}{\Delta}
    \\
    \gamma: \Hom{\Delta}{\Gamma}
  }{
    \HomCmp{
      \parens*{\HomCmp{\gamma}{\delta}}
    }{\eta}
  }{
    \HomCmp{\gamma}{
      \parens*{\HomCmp{\delta}{\eta}}
    }
  }{
    \Hom{\Xi}{\Gamma}
  }
\end{mathparpagebreakable}

The collection of models of $\SigCAT{}$ induces a \emph{1-categorical} notion
of ``category'', which we will exploit in our formulation of algebraic cwfs.

\section{Algebraic cwfs as a notion of type theory}\label{sec:cwf}

In semantics, categories with families (cwfs) are a familiar doctrine for type
theories~\citep{dybjer:1996,fiore:2012}, naturally organized into a 2-category;
but another perspective is given by the \emph{generalized algebraic theory of
cwfs}, whose 1-category of models and homomorphisms gives a more algebraic and
strict notion of type theory, in which all structure is chosen globally and
homomorphisms are arranged to preserve it on the nose.
We define the theory of cwfs, $\IsThy{\SigCWF}$, by extending
$\SigCAT$ with the following sorts, operations and axioms.

\paragraph{Types and elements}
\hfill
\nopagebreak[4]

\begin{mathparpagebreakable}
  \RuleSortDecl{\TY}{
    \Gamma:\OB
  }{\Ty{\Gamma}}
  \and
  \RuleOpDecl{\TYACT}{
    \Delta,\Gamma : \OB
    \\
    \gamma:\Hom{\Delta}{\Gamma}
    \\
    A : \Ty{\Gamma}
  }{
    \Ty{\Delta}
  }{
    \Act{\gamma}{A}
  }
  \and
  \RuleTermAxiom{
    \Gamma : \OB
    \\
    A : \Ty{\Gamma}
  }{
    \Act{\HomId}{A}
  }{A}{
    \Ty{\Gamma}
  }
  \and
  \RuleTermAxiom{
    \Eta,\Delta,\Gamma:\OB
    \\
    \delta:\Hom{\Eta}{\Delta}
    \\
    \gamma:\Hom{\Delta}{\Gamma}
    \\
    A : \Ty{\Gamma}
  }{
    \Act{\HomCmp{\gamma}{\delta}}{A}
  }{
    \Act{\delta}{\Act{\gamma}{A}}
  }{
    \Ty{\Eta}
  }
  \and
  \RuleSortDecl{\EL}{
    \Gamma:\OB
    \\
    A : \Ty{\Gamma}
  }{\El{\Gamma}{A}}
  \and
  \RuleOpDecl{\ELACT}{
    \Delta,\Gamma:\OB
    \\
    \gamma:\Hom{\Delta}{\Gamma}
    \\
    A : \Ty{\Gamma}
    \\
    M : \El{\Gamma}{A}
  }{
    \El{\Delta}{\Act{\gamma}{A}}
  }{\Act{\gamma}{M}}
  \and
  \RuleTermAxiom{
    \Gamma:\OB
    \\
    A : \Ty{\Gamma}
    \\
    M : \El{\Gamma}{A}
  }{
    \Act{\Id}{M}
  }{
    M
  }{
    \El{\Gamma}{M}
  }
  \and
  \RuleTermAxiom{
    \Eta,\Delta,\Gamma:\OB
    \\
    \delta:\Hom{\Eta}{\Delta}
    \\
    \gamma:\Hom{\Delta}{\Gamma}
    \\
    A : \Ty{\Gamma}
    \\
    a : \El{\Gamma}{A}
  }{
    \Act{\HomCmp{\gamma}{\delta}}{a}
  }{
    \Act{\delta}{\Act{\gamma}{a}}
  }{
    \El{\Eta}{\Act{\HomCmp{\gamma}{\delta}}{A}}
  }
\end{mathparpagebreakable}

\paragraph{Terminal context}

\begin{mathpar}
  \RuleOpDecl{\EMP}{}{\OB}{\cdot}
  \and
  \RuleOpDecl{\BANG}{
    \Gamma:\OB
  }{
    \Hom{\Gamma}{\Emp}
  }{\BANG}
  \and
  \RuleTermAxiom{
    \Delta,\Gamma:\OB
    \\
    \gamma:\Hom{\Delta}{\Gamma}
  }{
    \HomCmp{\BANG}{\gamma}
  }{
    \BANG
  }{
    \Hom{\Gamma}{\Emp}
  }
  \and
  \RuleTermAxiom{}{
    \HomId
  }{
    \BANG
  }{
    \Hom{\Emp}{\Emp}
  }
\end{mathpar}

\paragraph{Context comprehension}\hfill

\begin{mathparpagebreakable}
  \RuleOpDecl{\EXT}{
    \Gamma:\OB
    \\
    A : \Ty{\Gamma}
  }{
    \OB
  }{\Gamma.A}
  \and
  \RuleOpDecl{\SNOC}{
    \Delta,\Gamma:\OB
    \\
    A:\Ty{\Gamma}
    \\
    \gamma:\Hom{\Delta}{\Gamma}
    \\
    N : \El{\Delta}{\Act{\gamma}{A}}
  }{
    \Hom{\Delta}{\Gamma.A}
  }{\Snoc{\gamma}{N}}
  \and
  \RuleOpDecl{\PROJ}{
    \Gamma:\OB
    \\
    A:\Ty{\Gamma}
  }{
    \Hom{\Gamma.A}{\Gamma}
  }{\PROJ}
  \and
  \RuleOpDecl{\VAR}{
    \Gamma:\OB
    \\
    A:\Ty{\Gamma}
  }{
    \El{\Gamma.A}{\Act{\PROJ}{A}}
  }{\VAR}
  \and
  \RuleTermAxiom{
    \Delta,\Gamma:\OB
    \\
    \gamma:\Hom{\Delta}{\Gamma}
    \\
    A:\Ty{\Gamma}
    \\
    M:\El{\Delta}{\Act{\gamma}{A}}
  }{
    \HomCmp{\PROJ}{\Snoc{\gamma}{M}}
  }{
    \gamma
  }{
    \Hom{\Delta}{\Gamma}
  }
  \and
  \RuleTermAxiom{
    \Delta,\Gamma:\OB
    \\
    \gamma:\Hom{\Delta}{\Gamma}
    \\
    A:\Ty{\Gamma}
    \\
    M:\El{\Delta}{\Act{\gamma}{A}}
  }{
    \Act{\Snoc{\gamma}{M}}{\VAR}
  }{
    M
  }{
    \El{\Delta}{\Act{\gamma}{A}}
  }
  \and
  \RuleTermAxiom{
    \Eta,\Delta,\Gamma:\OB
    \\
    \gamma:\Hom{\Delta}{\Gamma}
    \\
    \delta:\Hom{\Eta}{\Delta}
    \\
    A:\Ty{\Gamma}
    \\
    M:\El{\Delta}{\Act{\gamma}{A}}
  }{
    \HomCmp{
      \Snoc{\gamma}{M}
    }{
      \delta
    }
  }{
    \Snoc{
      \HomCmp{\gamma}{\delta}
    }{
      \Act{\delta}{M}
    }
  }{
    \Hom{\Eta}{\Gamma.A}
  }
  \and
  \RuleTermAxiom{
    \Gamma:\OB
    \\
    A:\Ty{\Gamma}
  }{
    \Id
  }{
    \Snoc{\PROJ}{\VAR}
  }{
    \Hom{\Gamma.A}{\Gamma.A}
  }
\end{mathparpagebreakable}

The highly general results of \citeauthor{cartmell:1978} give rise to a
category of models of $\SigCWF$, which has an initial object, inducing a
functorial semantics in the sense of \citet{lawvere:2004}. This initial object is the
\emph{free Martin-L\"of Type Theory without any connectives}, and with a bit of
labor, it can be seen to be isomorphic to the Lindenbaum-Tarski model generated
by the raw syntax of pure MLTT without connectives.

By extending the theory $\SigCWF$ with further structure, such as dependent
function types, dependent pair types, universes, identity types, cubical
interval, etc., nearly every conceivable extension of MLTT can be obtained,
together with the appropriate category of models and homomorphisms. In this
note, part of our intention is to show how to obtain some of these extensions
which are commonly believed to fall outside the range of applicability of algebraic
techniques.

We emphasize that establishing the equivalence between these initial models and
Lindenbaum-Tarski models obtained by constraining and then quotienting the
``raw syntax'' is laborious in a technical sense, and may in the future be seen
to be superfluous: while some have fetishized the raw syntax of type theory to
such a degree that the specific textual/linear rendering of name binding (and
the attendant $\alpha$-convention) has been elevated from an expedient notation
to an actual object of study, we predict that the (abstractly presented)
initial models for algebraic type theory will ultimately be taken as definitive
in the study of type-theoretic syntax, as advocated
in~\citet{castellan-clairambault-dybjer:2017}.\footnote{\citeauthor{castellan-clairambault-dybjer:2017}
argue that it is circular to obtain the initial cwf from the generalized
algebraic theory of cwfs, because ``the notion of a generalised algebraic
theory is itself based on dependent type theory''. On the contrary, the notion
of generalized algebraic theory is developed primitively by
\citet{cartmell:1978,cartmell:1986} in the ambient set theory without making
use of any pre-existing type-theoretic machinery. Therefore, we maintain that
no circularity ensues from the algebraic generation of the initial cwfs;
\emph{a posteriori}, the latent cwf structure involved in defining the notion
of generalized algebraic theories can be observed, underscoring the unity
between the metatheory of type theory and type theory itself.}

Traditional presentations of type theory using raw syntax and unstructured
masses of inference rules took hold in the years before a workable account of
dependently typed universal algebra had been obtained; rather than making a
virtue out of ancient necessity, we hold that the most suitable notion of
syntax arises abstractly in a purely algebraic way (much like how previous
generations of type theorists had already eschewed the antique identification
of syntax with punctuated sequences of symbols~\citep{aczel:1978}).

From the perspective of a \emph{user} of type theory, we stress, there is no
serious gap presented by the abstract syntax induced by the initial cwf; in
fact, concrete computerized implementations of type theory tend to be much
closer to the abstract syntax of the initial cwf than to the ``raw'' syntax
which some have insisted is a primary object of study.

\section{Algebraic type hierarchies}

We begin by showing how to extend the theory of cwfs to include predicative
hierarchies of type systems, giving an algebraic treatment to Coquand's
\emph{cumulative cwfs}~\citep{coquand:2018}; then we will show how to add a
hierarchy of universes \`a la Russell in a modular way. We will replace the
$\TY$ operator with something parameterized in a universe level $\alpha$, giving
the sort of types of level $\alpha$.

\paragraph{Theory of type levels}
\hfill

\begin{mathparpagebreakable}
  \RuleSortDecl{\LVL}{}{\LVL}
  \and
  \RuleOpDecl{\LZ}{}{\LVL}{0}
  \and
  \RuleOpDecl{\LS}{
    \alpha:\LVL
  }{\LVL}{\alpha+1}
  \and
  \RuleSortDecl{\LT}{
    \alpha,\beta:\LVL
  }{\alpha<\beta}
  \and
  \RuleOpDecl{\LTZ}{
    \alpha:\LVL
  }{
    0 < \alpha + 1
  }
  \and
  \RuleOpDecl{\LTS}{
    \alpha:\LVL
  }{
    \alpha<\alpha+1
  }
  \and
  \RuleOpDecl{\LTC}{
    \alpha,\beta:\LVL
    \\
    p:\alpha<\beta
  }{
    \alpha+1<\beta+1
  }
  \and
  \RuleOpDecl{\LTCMP}{
    \alpha,\alpha',\beta:\LVL
    \\
    p:\alpha<\alpha'
    \\
    q:\alpha'<\beta
  }{
    \alpha<\beta
  }
  \and
  \RuleTermAxiom{
    \alpha,\beta:\LVL
    \\
    p,q:\alpha<\beta
  }{p}{q}{
    \alpha<\beta
  }
\end{mathparpagebreakable}

\paragraph{Types and elements}\hfill
\begin{mathparpagebreakable}
  \RuleSortDecl{\TY}{
    \alpha:\LVL
    \\
    \Gamma:\OB
  }{\ITy{\alpha}{\Gamma}}
  \and
  \RuleSortDecl{\EL}{
    \alpha:\LVL
    \\
    \Gamma:\OB
    \\
    A : \ITy{\alpha}{\Gamma}
  }{\El{\Gamma}{A}}
  \and
  \ldots
\end{mathparpagebreakable}

We will not recapitulate the remainder of the theory (e.g.\ context
comprehension), noting that it proceeds by adding $\alpha:\LVL$ to most
telescopes. Instead, we focus on what must be added to achieve the
\emph{algebraic} version of cumulativity for types, and then (algebraically)
cumulative universes \`a la Russell.

\begin{remark}[Algebraic cumulativity]

  In this note, we consider an algebraic form of \emph{cumulativity} which does
  not require any kind of subtyping. Instead, we have explicit shifts between
  universes \emph{which are ensured by algebraic laws to commute with all the
  connectives of type theory}: to put it crudely, we require
  $\Lift{A\times B} = \Lift{A}\times\Lift{B}$.

\end{remark}

\subsection{Algebraic cumulativity and lifting}

In order to achieve cumulativity, we add an operator which lifts
a type of level $\alpha$ to level $\beta>\alpha$:
\begin{mathparpagebreakable}
  \RuleOpDecl[lift formation]{\LIFT}{
    \alpha,\beta:\LVL
    \\
    p:\alpha<\beta
    \\
    \Gamma:\OB
    \\
    A:\ITy{\alpha}{\Gamma}
  }{
    \ITy{\beta}{\Gamma}
  }{\Lift[\alpha][\beta]{A}}
  \and
  \RuleTermAxiom[lift substitution]{
    \alpha,\beta:\LVL
    \\
    p : \alpha<\beta
    \\
    \Delta,\Gamma:\OB
    \\
    \gamma:\Hom{\Delta}{\Gamma}
    \\
    A:\ITy{\alpha}{\Gamma}
  }{
    \Act{\gamma}{\parens*{\Lift[\alpha][\beta]{A}}}
  }{
    \Lift[\alpha][\beta]{\parens*{\Act{\gamma}{A}}}
  }{
    \ITy{\alpha}{\Delta}
  }
  \and
  \RuleTermAxiom[lift composition]{
    \alpha,\alpha',\beta:\LVL
    \\
    p:\alpha<\alpha'
    \\
    q:\alpha'<\beta
    \\
    \Gamma:\OB
    \\
    A:\ITy{\alpha}{\Gamma}
  }{
    \Lift[\alpha'][\beta]{\Lift[\alpha][\alpha']{A}}
  }{
    \Lift[\alpha][\beta]{A}
  }{
    \ITy{\beta}{\Gamma}
  }
  \and
  \RuleSortAxiom[element lifting]{
    \alpha,\beta:\LVL
    \\
    p:\alpha<\beta
    \\
    \Gamma:\OB
    \\
    A:\ITy{\alpha}{\Gamma}
  }{
    \El{\Gamma}{A}
  }{
    \El{\Gamma}{\Lift[\alpha][\beta]{A}}
  }
  \and
  \RuleTermAxiom[context lifting]{
    \alpha,\beta:\LVL
    \\
    p:\alpha<\beta
    \\
    \Gamma:\OB
    \\
    A:\ITy{\alpha}{\Gamma}
  }{
    \Gamma.A
  }{
    \Gamma.\Lift[\alpha][\beta]{A}
  }{
    \OB
  }
\end{mathparpagebreakable}

\subsection{Type-theoretic connectives}

Adding connectives (like dependent function types) to the algebraic theory of
cumulative cwfs is simple, but we must take care to ensure that level shifting
commutes through the connectives properly. In the case of dependent function
types, we go beyond the usual only in adding enough axioms to make the shifts
``irrelevant'' (for instance, equating $\Lift[\alpha][\beta]{\TyPi{A}{B}}$ and
$\TyPi{\Lift[\alpha][\beta]{A}}{\Lift[\alpha][\beta]{B}}$).
\begin{mathparpagebreakable}
  \RuleOpDecl[pi formation]{\PI}{
    \alpha:\LVL
    \\
    \Gamma:\OB
    \\
    A:\ITy{\alpha}{\Gamma}
    \\
    B:\ITy{\alpha}{\Gamma.A}
  }{
    \ITy{\alpha}{\Gamma}
  }{\TyPi{A}{B}}
  \and
  \RuleTermAxiom[pi lifting]{
    \alpha,\beta:\LVL
    \\
    \Gamma:\OB
    \\
    A:\ITy{\alpha}{\Gamma}
    \\
    B:\ITy{\alpha}{\Gamma.A}
  }{
    \Lift[\alpha][\beta]{\TyPi{A}{B}}
  }{
    \TyPi{\Lift[\alpha][\beta]{A}}{\Lift[\alpha][\beta]{B}}
  }{
    \ITy{\beta}{\Gamma}
  }
  \and
  \RuleTermAxiom[pi substitution]{
    \alpha:\LVL
    \\
    \Delta,\Gamma:\OB
    \\
    \gamma:\Hom{\Delta}{\Gamma}
    \\
    A:\ITy{\alpha}{\Gamma}
    \\
    B:\ITy{\alpha}{\Gamma.A}
  }{
    \Act{\gamma}{\TyPi{A}{B}}
  }{
    \TyPi{\Act{\gamma}{A}}{
      \Act{\Snoc{\HomCmp{\gamma}{\PROJ}}{\VAR}}{B}
    }
  }{
    \ITy{\alpha}{\Delta}
  }
  \and
  \RuleOpDecl[pi introduction]{\LAM}{
    \alpha:\LVL
    \\
    \Gamma:\OB
    \\
    A:\ITy{\alpha}{\Gamma}
    \\
    B:\ITy{\alpha}{\Gamma.A}
    \\
    M:\El{\Gamma.A}{B}
  }{
    \El{\Gamma}{\TyPi{A}{B}}
  }{\Lam{M}}
  \and
  \RuleTermAxiom[lambda lifting naturality]{
    \alpha,\beta:\LVL
    \\
    p:\alpha<\beta
    \\
    \Gamma:\OB
    \\
    A:\ITy{\alpha}{\Gamma}
    \\
    B:\ITy{\alpha}{\Gamma.A}
    \\
    M:\El{\Gamma.A}{B}
  }{
    \Cut{\LAM}{\beta,\Gamma,\Lift[\alpha][\beta]{A},\Lift[\alpha][\beta]{B},M}
  }{
    \Cut{\LAM}{\alpha,\Gamma,A,B,M}
  }{
    \El{\Gamma}{\TyPi{A}{B}}
  }
  \and
  \RuleTermAxiom[lambda substitution]{
    \alpha:\LVL
    \\
    \Delta,\Gamma:\OB
    \\
    \gamma:\Hom{\Delta}{\Gamma}
    \\
    A:\ITy{\alpha}{\Gamma}
    \\
    B:\ITy{\alpha}{\Gamma.A}
    \\
    M:\El{\Gamma.A}{B}
  }{
    \Act{\gamma}{\Lam{M}}
  }{
    \Lam{\Act{\Snoc{\HomCmp{\gamma}{\PROJ}}{\VAR}}{M}}
  }{
    \El{\Delta}{
      \TyPi{\Act{\gamma}{A}}{
        \Act{\Snoc{\HomCmp{\gamma}{\PROJ}}{\VAR}}{B}
      }
    }
  }
  \and
  \RuleOpDecl[pi elimination]{\APP}{
    \alpha:\LVL
    \\
    \Gamma:\OB
    \\
    A:\ITy{\alpha}{\Gamma}
    \\
    B:\ITy{\alpha}{\Gamma.A}
    \\
    M:\El{\Gamma}{\TyPi{A}{B}}
    \\
    N:\El{\Gamma}{A}
  }{
    \El{\Gamma}{\Act{\Snoc{\HomId}{N}}{B}}
  }{\App{M}{N}}
  \and
  \RuleTermAxiom[app lifting naturality]{
    \alpha,\beta:\LVL
    \\
    p:\alpha<\beta
    \\
    \Gamma:\OB
    \\
    A:\ITy{\alpha}{\Gamma}
    \\
    B:\ITy{\alpha}{\Gamma.A}
    \\
    M:\El{\Gamma}{\TyPi{A}{B}}
    \\
    N:\El{\Gamma}{A}
  }{
    \Cut{\APP}{\beta,\Gamma,\Lift[\alpha][\beta]{A},\Lift[\alpha][\beta]{B},M,N}
  }{
    \Cut{\APP}{\alpha,\Gamma,A,B,M,N}
  }{
    \El{\Gamma}{\Act{\Snoc{\HomId}{N}}{B}}
  }
  \and
  \RuleTermAxiom[app substitution]{
    \alpha:\LVL
    \\
    \Delta,\Gamma:\OB
    \\
    \gamma:\Hom{\Delta}{\Gamma}
    \\
    A:\ITy{\alpha}{\Gamma}
    \\
    B:\ITy{\alpha}{\Gamma.A}
    \\
    M:\El{\Gamma}{\TyPi{A}{B}}
    \\
    N:\El{\Gamma}{A}
  }{
    \Act{\gamma}{\App{M}{N}}
  }{
    \App{\Act{\gamma}{M}}{\Act{\gamma}{N}}
  }{
    \El{\Delta}{
      \Act{\Snoc{\gamma}{\Act{\gamma}{N}}}{B}
    }
  }
  \and
  \RuleTermAxiom[pi unicity]{
    \alpha:\LVL
    \\
    \Gamma:\OB
    \\
    A:\ITy{\alpha}{\Gamma}
    \\
    B:\ITy{\alpha}{\Gamma.A}
    \\
    M:\El{\Gamma}{\TyPi{A}{B}}
  }{M}{
    \Lam{
      \App{
        \Act{\PROJ}{M}
      }{\VAR}
    }
  }{
    \El{\Gamma}{\TyPi{A}{B}}
  }
  \and
  \RuleTermAxiom[pi computation]{
    \alpha:\LVL
    \\
    \Gamma:\OB
    \\
    A:\ITy{\alpha}{\Gamma}
    \\
    B:\ITy{\alpha}{\Gamma.A}
    \\
    M:\El{\Gamma.A}{B}
    \\
    N:\El{\Gamma}{A}
  }{
    \App{M}{N}
  }{
    \Act{\Snoc{\HomId}{N}}{M}
  }{
    \El{\Gamma}{\Act{\Snoc{\HomId}{N}}{B}}
  }
\end{mathparpagebreakable}

\subsection{Universes \`a la Russell}
Now we see how to add universes \`a la Russell to the theory of cumulative
cwfs. First we add generators for the universe types themselves.

\begin{mathparpagebreakable}
  \RuleOpDecl[universe formation]{\UNIV}{
    \alpha,\beta:\LVL
    \\
    p:\alpha<\beta
    \\
    \Gamma:\OB
  }{
    \ITy{\beta}{\Gamma}
  }{\Univ{\alpha}}
  \and
  \RuleTermAxiom[universe lifting]{
    \alpha,\beta:\LVL
    \\
    p:\alpha<\alpha'
    \\
    q:\alpha<\beta
    \\
    \Gamma:\OB
  }{
    \Lift[\alpha'][\beta]{\Univ{\alpha}}
  }{
    \Univ{\alpha}
  }{
    \ITy{\beta}{\Gamma}
  }
  \and
  \RuleTermAxiom[universe substitution]{
    \alpha,\beta:\LVL
    \\
    p:\alpha<\beta
    \\
    \Delta,\Gamma:\OB
    \\
    \gamma:\Hom{\Delta}{\Gamma}
  }{
    \Act{\gamma}{\Univ{\alpha}}
  }{
    \Univ{\alpha}
  }{
    \ITy{\beta}{\Delta}
  }
\end{mathparpagebreakable}

To characterize the elements of the universes, it suffices to impose a sort
equation between the $\El{\Gamma}{\Univ{\alpha}}$ and $\ITy{\alpha}{\Gamma}$:
\begin{mathpar}
  \RuleSortAxiom[universe elements]{
    \alpha,\beta:\LVL
    \\
    p:\alpha<\beta
    \\
    \Gamma:\OB
  }{
    \El{\Gamma}{
      \Cut{\UNIV}{
        \alpha,\beta,p,\Gamma
      }
    }
  }{
    \ITy{\alpha}{\Gamma}
  }
\end{mathpar}

\subsection{A base type and constants}

We extend our theory with a base type and two constants; this will be useful
for illustrating a non-trivial metatheorem using algebraic methods.

\begin{mathparpagebreakable}
  \RuleOpDecl[formation]{\OBS}{
    \alpha:\LVL
    \\
    \Gamma:\OB
  }{
    \ITy{\alpha}{\Gamma}
  }{\OBS}
  \and
  \RuleTermAxiom[obs lifting]{
    \alpha,\beta:\LVL
    \\
    p:\alpha<\beta
    \\
    \Gamma:\OB
  }{
    \Lift[\alpha][\beta]{\OBS}
  }{
    \OBS
  }{
    \ITy{\beta}{\Gamma}
  }
  \and
  \RuleTermAxiom[obs substitution]{
    \alpha:\LVL
    \\
    \Delta,\Gamma:\OB
    \\
    \gamma:\Hom{\Delta}{\Gamma}
  }{
    \Act{\gamma}{\OBS}
  }{
    \OBS
  }{
    \ITy{\alpha}{\Delta}
  }
  \and
  \RuleOpDecl[introduction 1]{\RED}{
    \alpha:\LVL
    \\
    \Gamma:\OB
  }{
    \El{\Gamma}{\OBS}
  }{\RED}
  \and
  \RuleOpDecl[introduction 2]{\GREEN}{
    \alpha:\LVL
    \\
    \Gamma:\OB
  }{
    \El{\Gamma}{\OBS}
  }{\GREEN}
  \and
  \RuleTermAxiom[substitution 1]{
    \alpha:\LVL
    \\
    \Delta,\Gamma:\OB
    \\
    \gamma:\Hom{\Delta}{\Gamma}
  }{
    \Act{\gamma}{\RED}
  }{
    \RED
  }{
    \El{\Delta}{\OBS}
  }
  \and
  \RuleTermAxiom[substitution 2]{
    \alpha:\LVL
    \\
    \Delta,\Gamma:\OB
    \\
    \gamma:\Hom{\Delta}{\Gamma}
  }{
    \Act{\gamma}{\GREEN}
  }{
    \GREEN
  }{
    \El{\Delta}{\OBS}
  }
\end{mathparpagebreakable}

We do not add any elimination rules for the $\OBS$ type: its role is only to
serve as an \emph{observable} with which to phrase a canonicity result.

\section{Canonicity for Martin-L\"of Type Theory}

One of the major benefits of the algebraic approach to defining type theories
is that powerful \emph{semantic tools} are immediately available for developing
syntactic metatheory. As an example, we prove canonicity for Martin-L\"of Type
Theory with dependent function types and a hierarchy of universes \`a la
Russell, following \citet{coquand:2018}, \citet{shulman:2015} and
\citet{martin-lof:1975}. We will write $\SigMLTT$ for the theory defined in the
previous sections.

\subsection{The computability construction}

Starting from any model $\CC:\Alg{\SigMLTT}$, we will show how to construct a
\emph{new} model $\Gl{\CC}:\Alg{\SigMLTT}$ which \emph{glues} each context from $\CC$ together with a
\emph{logical family} over its closed elements, following
\citet{coquand:2018}.\footnote{We prefer the term ``logical family'' to the
more common term ``proof-relevant logical predicate''.} The logical families
construction is a modern and thoroughly constructive version of the method of
computability, in which predicates and relations are eschewed in favor of
proof-relevant families.

A consequence of this streamlined approach is that there is no need to consider
raw terms and partial equivalence relations, something which had previously
been essential for developing the syntactic metatheory of dependent type theory
with universes.

\begin{remark}[Categorical digression]

  While it is not our intention here to explain categorical gluing for
  dependent type theory (see \citet{coquand:2018} and \citet{shulman:2015}), we
  briefly observe that what we will unleash in type-theoretic language below
  can be understood more abstractly as an instance of the gluing construction,
  in which the fundamental fibration is pulled back along the global sections
  functor of the category of contexts:
  \begin{diagram}
    \DPullback{\Gl{\CC}}{
      \mathbf{Set}^{\to}
    }{\CC}{
      \mathbf{Set}
    }{
      \pi_{\mathsf{cod}}
    }{
      \CC(\cdot,-)
    }{}{}
  \end{diagram}

  The remainder of the work, then,  is to observe that the cwf structure lifts
  from $\CC$ to $\Gl{\CC}$. All aspects of the interpretation are forced
  \emph{except} for the interpretation of the base type and the universe, which
  we are free to choose; in fact, it is the choice of $\Gl{\CC}$-interpretation
  of the base type in which the essence of the proof of canonicity lies.

\end{remark}

\begin{assumption}[Set-theoretic universes]
  As a simplifying move, we assume a transfinite hierarchy of
  Grothendieck universes $\SemUniv{\alpha}$ in the ambient set theory for
  $\alpha\in\braces*{0,1,\ldots,\omega}$. The universe $\SemUniv{\omega}$ is only a
  convenience, and could be eliminated using a more verbose schematic or
  fibered construction.
\end{assumption}

To exhibit $\Gl{\CC}$ as a model, we must determine families of sets
$\SemSort[\Gl{\CC}]{\vartheta}$ to interpret each sort symbol
$\vartheta:\SortDecl{\Psi}\in\SigMLTT$; and likewise for operation symbols.
We begin by imposing some notation:
\begin{enumerate}

  \item We will write $\Dom{\Gamma}$ for
    $\SemSort[\CC]{\HOM}{\SemOp[\CC]{\cdot},\Gamma}$ when
    $\Gamma\in\SemSort[\CC]{\OB}$.

  \item We will write $\Dom{A}$ for $\SemSort[\CC]{\EL}{\alpha,\SemOp[\CC]{\cdot},A}$ when
    $A\in\SemSort[\CC]{\TY}{\alpha,\SemOp[\CC]{\cdot}}$.

  \item When it is unambiguous, we will use the notation of the algebraic
    theory itself to refer to objects in the models $\CC$ and $\Gl{\CC}$; for
    instance, $\Hom{\Delta}{\Gamma}$ rather than
    $\SemOp[\CC]{\HOM}{\Delta,\Gamma}$, and $\Hom{\Gl{\Delta}}{\Gl{\Gamma}}$
    rather than $\SemOp[\Gl{\CC}]{\HOM}{\Gl{\Delta},\Gl{\Gamma}}$. To avoid
    ambiguity, we will write the $\Gl{\CC}$-interpretations of unary sorts and
    operations with an overline, e.g.\ $\Gl{\OB}$ for
    $\SemSort[\Gl{\CC}]{\OB}$.

\end{enumerate}

As a notational convenience, we will generally write $\Gl{X}\equiv(X,\LFam{X})$
for something in the computability model, the gluing of $X$ from $\CC$ with its
``realizer'' $\LFam{X}$.

\paragraph{Contexts and substitutions}
We are now ready to define the glued interpretation of the contexts and substitutions.
\begin{align*}
  \Gl{\OB} &=
  \textstyle
  \coprod_{
    \Gamma\in\OB
  }{
    \parens*{\Dom{\Gamma}\to\SemUniv{\omega}}
  }
  \tag{objects}
\end{align*}

We will write $\Gl{\Gamma}$ for
$\parens*{\Gamma,\LFam{\Gamma}}\in\Gl{\OB}$. Substitutions are
interpreted as $\CC$-substitutions together with realizers of the logical
family.
\begin{align*}
  \Hom{\Gl{\Delta}}{\Gl{\Gamma}}
  &=
  \textstyle
  \coprod_{
    \gamma\in\Hom{\Delta}{\Gamma}
  }
  \prod_{
    \delta\in\Dom{\Delta}
  }
  \prod_{
    \LFam{\delta}\in\LFam{\Delta}\delta
  }
  \LFam{\Gamma}{
    \parens*{\HomCmp{\gamma}{\delta}}
  }
  \tag{substitutions}
\end{align*}

When interpreting operations $\Gl{\vartheta}$ whose sort-interpretations in
$\Gl{\CC}$ glue a $\CC$-construct together with a realizer, we follow
convention of merely exhibiting the realizer $\LFam{\vartheta}$ rather than
$\parens*{\vartheta, \LFam{\vartheta}}$.
\begin{align*}
  \LFam{\HOMID}\gamma\LFam{\gamma} &= \LFam{\gamma}
  \tag{identity substitution}
  \\
  \LFam{\parens*{\HomCmp{\Gl{\gamma}}{\Gl{\delta}}}}\eta\LFam{\eta} &=
  \LFam{\gamma}\parens*{\HomCmp{\delta}{\eta}}\parens*{\LFam{\delta}\eta\LFam{\eta}}
  \tag{substitution composition}
\end{align*}

\paragraph{Levels}

First, observe that we can embed any natural number $\alpha$ as a term of sort
$\LVL$ in $\SigMLTT$; write $\lfloor\alpha\rfloor$ for its interpretation in
$\CC$. Then, we interpret $\LVL$ in the computability model as a $\CC$-level
together with a compatible natural number (its ``realizer''):
\begin{align*}
  \Gl{\LVL} &=
  \textstyle
  \coprod_{
    \alpha\in\LVL
  }
  \braces*{\LFam{\alpha} \in\Nat \mid \lfloor \LFam{\alpha} \rfloor = \alpha}
  \\
  \Gl{\LZ} &= \parens*{\LZ,0}
  \\
  \Cut{\LS}{\Gl{\alpha}} &= \parens*{\Cut{\LS}{\alpha}, \LFam{\alpha}+1}
\end{align*}

Writing $\Gl{\alpha}$ for $\parens*{\alpha,\LFam{\alpha}}\in\Gl{\LVL}$, we
interpret the order on levels in the obvious way:
\begin{align*}
  \parens*{\Gl{\alpha} < \Gl{\beta}}
  &=
  \braces*{p: \alpha < \beta \mid \LFam{\alpha}<\LFam{\beta}}
\end{align*}

The interpretation is always subsingleton, so it validates the proof
irrelevance axiom that we imposed on $\LVL$.

\paragraph{Types and elements}

A type is interpreted as a $\CC$-type together with a logical family of the
appropriate size defined over its closed instances:
\begin{align*}
  \ITy{\Gl{\alpha}}{\Gl{\Gamma}}
  &=
  \textstyle
  \coprod_{
    A\in\ITy{\alpha}{\Gamma}
  }
  \prod_{
    \gamma\in\Dom{\Gamma}
  }
  \prod_{
    \LFam{\gamma}\in\LFam{\Gamma}\gamma
  }
  \parens*{\Dom{\Act{\gamma}{A}} \to \SemUniv{\LFam{\alpha}}}
  \tag{types}
  \\
  \El{\Gl{\Gamma}}{\Gl{A}}
  &=
  \textstyle
  \coprod_{M\in\El{\Gamma}{A}}
  \prod_{
    \gamma\in\Dom{\Gamma}
  }
  \prod_{
    \LFam{\gamma}\in\LFam{\Gamma}\gamma
  }
  \LFam{A}\gamma\LFam{\gamma}\Act{\gamma}{M}
  \tag{elements}
\end{align*}

\paragraph{Terminal context}\hfill
\begin{align*}
  \LFam{\cdot}\eta &= \braces*{\star}
  \tag{terminal context}
  \\
  \LFam{\BANG}\eta\LFam{\eta} &= \star
  \tag{universal substitution}
\end{align*}

\paragraph{Context comprehension}\hfill
\begin{align*}
  \LFam{\Gl{\Gamma}.\Gl{A}}\eta
  &=
  \textstyle
  \coprod_{
    \LFam{\gamma}\in \LFam{\Gamma}\parens*{\HomCmp{\PROJ}{\eta}}
  }
  \LFam{A}\parens*{\HomCmp{\PROJ}{\eta}}\LFam{\gamma}\parens*{\Act{\eta}{\VAR}}
  \tag{context extension}
  \\
  \LFam{
    \Snoc{\Gl{\gamma}}{\Gl{N}}
  }
  \delta
  \LFam{\delta}
  &=
  \parens*{\LFam{\gamma}\delta\LFam{\delta}, \LFam{N}\delta\LFam{\delta}}
  \tag{substitution extension}
  \\
  \Gl{\PROJ}\eta\parens*{\LFam{\gamma},\LFam{N}} &= \LFam{\gamma}
  \tag{projection}
  \\
  \Gl{\VAR}\eta\parens*{\LFam{\gamma},\LFam{N}} &= \LFam{N}
  \tag{variables}
\end{align*}

\paragraph{Type lifting}
The following interpretation of type lifting is well-defined, because we have
imposed the axiom that the lifting of a type has the same elements as the type
itself, and because the set-theoretic universe hierarchy $\SemUniv{\alpha}$ is
cumulative.
\begin{align*}
  \LFam{\parens*{\Lift[\Gl{\alpha}][\Gl{\beta}]{\Gl{A}}}}
  \gamma\LFam{\gamma}M &=
  \LFam{A}\gamma\LFam{\gamma}M
  \tag{type lifting}
\end{align*}

\paragraph{Dependent function types}\hfil
\begin{align*}
  \LFam{\TyPi{\Gl{A}}{\Gl{B}}}\gamma\LFam{\gamma}M &=
  \textstyle
  \prod_{
    N\in\Dom{\Act{\gamma}{A}}
  }
  \prod_{
    \LFam{N}\in\LFam{A}\gamma\LFam{\gamma}N
  }
  \LFam{B}\Snoc{\gamma}{N}\parens*{\LFam{\gamma},\LFam{N}}
  \parens*{\App{M}{N}}
  \tag{formation}
  \\
  \LFam{\Lam{\Gl{M}}}\gamma\LFam{\gamma}N\LFam{N} &=
  \LFam{M}\Snoc{\gamma}{N}\parens*{\LFam{\gamma},\LFam{N}}
  \tag{introduction}
  \\
  \LFam{\App{\Gl{M}}{\Gl{N}}}\gamma\LFam{\gamma}
  &=
  \LFam{M}\gamma\LFam{\gamma}N\LFam{N}
  \tag{elimination}
\end{align*}

\paragraph{Universes \`a la Russell}

A realizer for an element of the $\Gl{\alpha}$th universe is an
$\LFam{\alpha}$-small logical family over its closed elements:
\begin{align*}
  \LFam{\Univ{\Gl{\alpha}}}\gamma\LFam{\gamma}A &=
  \textstyle
  \Dom{A}\to\SemUniv{\LFam{\alpha}}
  \tag{formation}
\end{align*}

Fixing $\Gl{\alpha,\beta}\in\Gl{\LVL}$ with $\Gl{p} : \Gl{\alpha}<\Gl{\beta}$
and $\Gl{\Gamma}\in\Gl{\OB}$, we need to see that
$\El{\Gl{\Gamma}}{\Univ{\Gl{\alpha}}} = \ITy{\Gl{\alpha}}{\Gl{\Gamma}}$. Calculate:
\begin{align*}
  &\El{\Gl{\Gamma}}{\Univ{\Gl{\alpha}}}
  \\
  &\quad=
  \textstyle
  \coprod_{
    A\in\El{\Gamma}{\Univ{\alpha}}
  }
  \prod_{
    \gamma\in\Dom{\Gamma}
  }
  \prod_{
    \LFam{\gamma}\in\LFam{\Gamma}\gamma
  }
  \LFam{\Univ{\Gl{\alpha}}}\gamma\LFam\gamma \Act{\gamma}{A}
  \tag{by def.}
  \\
  &\quad=
  \textstyle
  \coprod_{
    A\in\ITy{\alpha}{\Gamma}
  }
  \prod_{
    \gamma\in\Dom{\Gamma}
  }
  \prod_{
    \LFam{\gamma}\in\LFam{\Gamma}\gamma
  }
  \LFam{\Univ{\Gl{\alpha}}}\gamma\LFam\gamma \Act{\gamma}{A}
  \tag{$\CC$ has universes}
  \\
  &\quad=
  \textstyle
  \coprod_{
    A\in\ITy{\alpha}{\Gamma}
  }
  \prod_{
    \gamma\in\Dom{\Gamma}
  }
  \prod_{
    \LFam{\gamma}\in\LFam{\Gamma}\gamma
  }
  \parens*{\Dom{\Act{\gamma}{A}}\to\SemUniv{\LFam{\alpha}}}
  \tag{by def.}
  \\
  &\quad=
  \ITy{\Gl{\alpha}}{\Gl{\Gamma}}
  \tag{by def.}
\end{align*}

\paragraph{Base type}

Finally, we give the interpretation of the base type.
\begin{align*}
  \LFam{\OBS}\gamma\LFam{\gamma}M &=
  \braces*{\star\mid M = \RED}
  +
  \braces*{\star\mid M = \GREEN}
  \\
  \LFam{\RED} \gamma\LFam{\gamma} &=
  \Inl{\star}
  \\
  \LFam{\GREEN} \gamma\LFam{\gamma} &=
  \Inr{\star}
\end{align*}

\paragraph{Projection from the computability model}

There is an evident projection $\pi : \Gl{\CC}\rightarrowtriangle\CC$ which
merely forgets the realizers; it is easy to see that it is a homomorphism of
$\SigMLTT$-algebras.

\subsection{The canonicity theorem}
In this section, we let $\CC$ be the \emph{initial} model for $\SigMLTT$.

\begin{theorem}[Canonicity]

  If $\IsTerm[\SigMLTT]{\cdot}{M}{\El{\cdot}{\OBS}}$, then
  either $M = \RED$ or $M=\GREEN$.

\end{theorem}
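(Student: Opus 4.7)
The plan is to exploit the initiality of $\CC$ together with the gluing model $\Gl{\CC}$. By initiality, there is a unique $\SigMLTT$-homomorphism $\iota : \CC\to\Gl{\CC}$, and composing with the projection $\pi : \Gl{\CC}\to\CC$ gives an endo-homomorphism $\pi\circ\iota$ of $\CC$, which by the uniqueness clause of initiality must coincide with $\HomId_\CC$. Hence $\pi$ is a retraction of $\iota$.

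Given $\IsTerm[\SigMLTT]{\cdot}{M}{\El{\cdot}{\OBS}}$, interpret $M$ in the initial model (still denoted $M$) and push it along $\iota$ to obtain $\iota(M)\in\El{\Gl{\cdot}}{\Gl{\OBS}}$. By the definition of the glued elements sort, this is a pair $(M', \LFam{M})$ consisting of a $\CC$-element together with its realizer; the retraction $\pi\circ\iota=\HomId_\CC$ forces $M' = M$, since $\pi$ simply projects out the first component.

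By the definition of $\LFam{M}$, we may evaluate it at $\gamma = \HomId\in\Dom{\cdot}$ and $\LFam{\gamma}=\star\in\LFam{\cdot}\,\HomId$. Using the algebraic axiom $\Act{\HomId}{M} = M$, this produces an inhabitant
\begin{align*}
  \LFam{M}\,\HomId\,\star \;\in\; \LFam{\OBS}\,\HomId\,\star\,M \;=\; \braces*{\star \mid M = \RED} + \braces*{\star \mid M = \GREEN}.
\end{align*}
Any such inhabitant is either $\Inl{\star}$ or $\Inr{\star}$, respectively forcing $M = \RED$ or $M = \GREEN$ as equations in the initial model, which is what the theorem asserts.

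The only non-routine aspect of this argument is verifying that $\Gl{\CC}$ genuinely forms an $\SigMLTT$-algebra --- each sort and term axiom of $\SigMLTT$ must be validated against the realizer-level definitions of the computability construction --- and that the projection $\pi$ is a homomorphism. Both facts are set up by the construction preceding the theorem; once they are in hand, canonicity reduces to the short unfolding above, with the entire content of the theorem packaged in the choice of $\LFam{\OBS}$ as the coproduct of two singletons indexed by the two constants.
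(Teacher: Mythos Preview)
Your proposal is correct and follows essentially the same route as the paper's own proof: interpret $M$ in the glued model, use initiality of $\CC$ together with the projection $\pi$ to identify the underlying $\CC$-component with $M$ itself, and then read off the desired disjunction from the realizer evaluated at the unique point of $\Dom{\cdot}$. You are slightly more explicit than the paper in spelling out the section--retraction argument $\pi\circ\iota = \HomId_\CC$, whereas the paper compresses this into an appeal to ``the universal property of the initial $\SigMLTT$-algebra $\CC$ and the fact that $\pi$ is a homomorphism''; but the content is identical.
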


\begin{proof}

  Because $\Gl{\CC}$ is a model of $\SigMLTT$, we have some $\parens*{N,\LFam{N}}$ in
  the interpretation of $M$ such that $\pi M = N$ and $\LFam{N}(\cdot)(\star)$ is
  an element of the following set:
  \[
    \braces*{\star \mid N = \RED}
    +
    \braces*{\star \mid N = \GREEN}
  \]
  Therefore, it suffices to observe that $M=N$; this is guaranteed by the
  universal property of the initial $\SigMLTT$-algebra $\CC$ and the fact that
  $\pi$ is a homomorphism of $\SigMLTT$-algebras.

\end{proof}

\section*{Acknowledgements}

Thanks to Carlo Angiuli, John Cartmell, David Thrane Christiansen, Thierry
Coquand, Daniel Gratzer, Robert Harper, Ambrus Kaposi, Darin Morrison, Anders M\"ortberg,
Michael Shulman, and Thomas Streicher for helpful conversations about algebraic
type theory, logical frameworks and the method of computability. We also thank
Paul Taylor for his \verb|diagrams| package, which we have used to typeset the
commutative diagrams in this note. Special thanks to David Thrane Christiansen,
Daniel Gratzer and Jacques Carette for comments on an earlier version of this
note.

\nocite{streicher:1991}
\nocite{hofmann:1997}
\nocite{abel-coquand-dybjer:2008}
\nocite{buisse-dybjer:2008}
\nocite{palmgren-vickers:2007}
\nocite{martin-lof:1984}
\nocite{nordstrom-peterson-smith:1990}
\nocite{harper-honsell-plotkin:1993}
\nocite{fiore:2002}
\nocite{shulman:blog:scones-logical-relations}
\nocite{streicher:1998}
\nocite{martin-lof:itt:1975}
\nocite{martin-lof:1986}
\nocite{harper:2012}
\nocite{martin-lof:1994}
\nocite{martin-lof:1987:wgl}
\nocite{jacobs:1999}

\bibliographystyle{plainnat}
\bibliography{references/refs-bibtex}

\begin{thebibliography}{40}
\providecommand{\natexlab}[1]{#1}
\providecommand{\url}[1]{\texttt{#1}}
\expandafter\ifx\csname urlstyle\endcsname\relax
  \providecommand{\doi}[1]{doi: #1}\else
  \providecommand{\doi}{doi: \begingroup \urlstyle{rm}\Url}\fi

\bibitem[Abel et~al.(2008)Abel, Coquand, and Dybjer]{abel-coquand-dybjer:2008}
Andreas Abel, Thierry Coquand, and Peter Dybjer.
\newblock On the algebraic foundation of proof assistants for intuitionistic
  type theory.
\newblock In Jacques Garrigue and Manuel~V. Hermenegildo, editors,
  \emph{Functional and Logic Programming}, pages 3--13. Springer Berlin
  Heidelberg, 2008.
\newblock ISBN 978-3-540-78969-7.

\bibitem[Aczel(1978)]{aczel:1978}
Peter Aczel.
\newblock A general {Church-Rosser} theorem, 1978.

\bibitem[Buisse and Dybjer(2008)]{buisse-dybjer:2008}
Alexandre Buisse and Peter Dybjer.
\newblock Towards formalizing categorical models of type theory in type theory.
\newblock \emph{Electronic Notes in Theoretical Computer Science},
  196:\penalty0 137--151, 2008.

\bibitem[Cartmell(1978)]{cartmell:1978}
John Cartmell.
\newblock Generalised algebraic theories and contextual categories, January
  1978.

\bibitem[Cartmell(1986)]{cartmell:1986}
John Cartmell.
\newblock Generalised algebraic theories and contextual categories.
\newblock \emph{Annals of Pure and Applied Logic}, 32:\penalty0 209--243, 1986.
\newblock ISSN 0168-0072.

\bibitem[Castellan et~al.(2017)Castellan, Clairambault, and
  Dybjer]{castellan-clairambault-dybjer:2017}
Simon Castellan, Pierre Clairambault, and Peter Dybjer.
\newblock Undecidability of equality in the free locally cartesian closed
  category (extended version).
\newblock \emph{Logical Methods in Computer Science}, 13\penalty0 (4), 2017.

\bibitem[Coquand(2018)]{coquand:2018}
Thierry Coquand.
\newblock Canonicity and normalization for {Dependent Type Theory}, October
  2018.
\newblock URL \url{https://arxiv.org/abs/1810.09367}.

\bibitem[Dybjer(1996)]{dybjer:1996}
Peter Dybjer.
\newblock Internal type theory.
\newblock In Stefano Berardi and Mario Coppo, editors, \emph{Types for Proofs
  and Programs: International Workshop, TYPES '95 Torino, Italy, June 5--8,
  1995 Selected Papers}, pages 120--134. Springer Berlin Heidelberg, 1996.
\newblock ISBN 978-3-540-70722-6.

\bibitem[Fiore(2002)]{fiore:2002}
Marcelo Fiore.
\newblock Semantic analysis of normalisation by evaluation for typed lambda
  calculus.
\newblock In \emph{Proceedings of the 4th ACM SIGPLAN International Conference
  on Principles and Practice of Declarative Programming}, PPDP '02, pages
  26--37. ACM, 2002.
\newblock ISBN 1-58113-528-9.
\newblock \doi{10.1145/571157.571161}.
\newblock URL \url{http://doi.acm.org/10.1145/571157.571161}.

\bibitem[Fiore(2012)]{fiore:2012}
Marcelo Fiore.
\newblock Discrete generalised polynomial functors, 2012.
\newblock URL \url{https://www.cl.cam.ac.uk/~mpf23/talks/ICALP2012.pdf}.
\newblock Slides from talk given at ICALP 2012.

\bibitem[Fiore and Mahmoud(2010)]{fiore-mahmoud:2010}
Marcelo Fiore and Ola Mahmoud.
\newblock Second-order algebraic theories.
\newblock In Petr Hlin{\v{e}}n{\'{y}} and Anton{\'{\i{}}}n Ku{\v{c}}era,
  editors, \emph{Mathematical Foundations of Computer Science 2010: 35th
  International Symposium, MFCS 2010, Brno, Czech Republic, August 23-27, 2010.
  Proceedings}, pages 368--380. Springer Berlin Heidelberg, 2010.
\newblock ISBN 978-3-642-15155-2.

\bibitem[Fiore et~al.(1999)Fiore, Plotkin, and Turi]{fiore-plotkin-turi:1999}
Marcelo Fiore, Gordon Plotkin, and Daniele Turi.
\newblock Abstract syntax and variable binding.
\newblock In \emph{Proceedings of the 14th Symposium on Logic in Computer
  Science}, pages 193--202, 1999.

\bibitem[Harper(2012{\natexlab{a}})]{harper:2012}
Robert Harper.
\newblock Notes on logical frameworks, November 2012{\natexlab{a}}.
\newblock URL \url{http://www.cs.cmu.edu/~rwh/papers/lfias/lf.pdf}.
\newblock Lecture on logical frameworks given at the IAS.

\bibitem[Harper(2012{\natexlab{b}})]{harper:2012:pfpl}
Robert Harper.
\newblock \emph{Practical Foundations for Programming Languages}.
\newblock Cambridge University Press, first edition, 2012{\natexlab{b}}.

\bibitem[Harper and Licata(2007)]{harper-licata:2007}
Robert Harper and Daniel~R. Licata.
\newblock Mechanizing metatheory in a logical framework.
\newblock \emph{Journal of Functional Programming}, 17\penalty0 (4-5):\penalty0
  613--673, July 2007.
\newblock ISSN 0956-7968.
\newblock \doi{10.1017/S0956796807006430}.
\newblock URL \url{http://dx.doi.org/10.1017/S0956796807006430}.

\bibitem[Harper et~al.(1993)Harper, Honsell, and
  Plotkin]{harper-honsell-plotkin:1993}
Robert Harper, Furio Honsell, and Gordon Plotkin.
\newblock A framework for defining logics.
\newblock \emph{J. ACM}, 40\penalty0 (1):\penalty0 143--184, January 1993.
\newblock ISSN 0004-5411.
\newblock \doi{10.1145/138027.138060}.
\newblock URL \url{http://doi.acm.org/10.1145/138027.138060}.

\bibitem[Hofmann(1997)]{hofmann:1997}
Martin Hofmann.
\newblock Syntax and semantics of dependent types.
\newblock In \emph{Semantics and Logics of Computation}, pages 79--130.
  Cambridge University Press, 1997.

\bibitem[Hyland and Power(2007)]{hyland-power:2007}
Martin Hyland and John Power.
\newblock The category theoretic understanding of universal algebra: Lawvere
  theories and monads.
\newblock \emph{Electronic Notes in Theoretical Computer Science},
  172:\penalty0 437--458, 2007.
\newblock ISSN 1571-0661.
\newblock \doi{https://doi.org/10.1016/j.entcs.2007.02.019}.
\newblock Computation, Meaning, and Logic: Articles dedicated to Gordon
  Plotkin.

\bibitem[Jacobs(1999)]{jacobs:1999}
Bart Jacobs.
\newblock \emph{Categorical Logic and Type Theory}.
\newblock Number 141 in Studies in Logic and the Foundations of Mathematics.
  North Holland, 1999.

\bibitem[Kaposi et~al.(2019)Kaposi, Kov\'{a}cs, and
  Altenkirch]{kaposi-kovacs-altenkirch:2019}
Ambrus Kaposi, Andr\'{a}s Kov\'{a}cs, and Thorsten Altenkirch.
\newblock Constructing quotient inductive-inductive types.
\newblock \emph{Proc. ACM Program. Lang.}, 3\penalty0 (POPL):\penalty0
  2:1--2:24, January 2019.
\newblock ISSN 2475-1421.
\newblock \doi{10.1145/3290315}.
\newblock URL \url{http://doi.acm.org/10.1145/3290315}.

\bibitem[Lawvere(2004)]{lawvere:2004}
William~F. Lawvere.
\newblock {Functorial Semantics of Algebraic Theories}, 2004.
\newblock URL \url{http://tac.mta.ca/tac/reprints/articles/5/tr5.pdf}.

\bibitem[Lee et~al.(2007)Lee, Crary, and Harper]{lee-crary-harper:2007}
Daniel~K. Lee, Karl Crary, and Robert Harper.
\newblock Towards a mechanized metatheory of standard ml.
\newblock In \emph{Proceedings of the 34th Annual ACM SIGPLAN-SIGACT Symposium
  on Principles of Programming Languages}, POPL '07, pages 173--184. ACM, 2007.
\newblock ISBN 1-59593-575-4.
\newblock \doi{10.1145/1190216.1190245}.
\newblock URL \url{http://doi.acm.org/10.1145/1190216.1190245}.

\bibitem[Martin-L\"{o}f(1975)]{martin-lof:1975}
Per Martin-L\"{o}f.
\newblock About models for intuitionistic type theories and the notion of
  definitional equality.
\newblock In Stig Kanger, editor, \emph{Proceedings of the Third Scandinavian
  Logic Symposium}, volume~82 of \emph{Studies in Logic and the Foundations of
  Mathematics}, pages 81--109. Elsevier, 1975.

\bibitem[Martin-L{\"{o}}f(1975)]{martin-lof:itt:1975}
Per Martin-L{\"{o}}f.
\newblock An intuitionistic theory of types: Predicative part.
\newblock In H.~E. Rose and J.~C. Shepherdson, editors, \emph{Logic Colloquium
  '73}, volume~80 of \emph{Studies in Logic and the Foundations of
  Mathematics}, pages 73--118. Elsevier, 1975.
\newblock \doi{https://doi.org/10.1016/S0049-237X(08)71945-1}.
\newblock URL
  \url{http://www.sciencedirect.com/science/article/pii/S0049237X08719451}.

\bibitem[Martin-L{\"{o}}f(1984)]{martin-lof:1984}
Per Martin-L{\"{o}}f.
\newblock \emph{Intuitionistic type theory}, volume~1 of \emph{Studies in Proof
  Theory}.
\newblock Bibliopolis, 1984.
\newblock ISBN 88-7088-105-9.

\bibitem[Martin-L{\"{o}}f(1986)]{martin-lof:1986}
Per Martin-L{\"{o}}f.
\newblock {Amendment to Intuitionistic Type Theory}, 1986.
\newblock Notes from a lecture given in G\"{o}teborg.

\bibitem[Martin-L{\"{o}}f(1987)]{martin-lof:1987:wgl}
Per Martin-L{\"{o}}f.
\newblock {The Logic of Judgements}, February 1987.
\newblock Workshop on General Logic, Laboratory for Foundations of Computer
  Science.

\bibitem[Martin-L{\"{o}}f(1994)]{martin-lof:1994}
Per Martin-L{\"{o}}f.
\newblock Analytic and synthetic judgements in type theory.
\newblock In Paolo Parrini, editor, \emph{Kant and Contemporary Epistemology},
  pages 87--99. Springer Netherlands, 1994.
\newblock ISBN 978-94-011-0834-8.

\bibitem[Martin-L{\"{o}}f(1996)]{martin-lof:1996}
Per Martin-L{\"{o}}f.
\newblock On the meanings of the logical constants and the justifications of
  the logical laws.
\newblock \emph{Nordic Journal of Philosophical Logic}, 1\penalty0
  (1):\penalty0 11--60, 1996.

\bibitem[Nordstr\"{o}m et~al.(1990)Nordstr\"{o}m, Peterson, and
  Smith]{nordstrom-peterson-smith:1990}
Bengt Nordstr\"{o}m, Kent Peterson, and Jan~M. Smith.
\newblock \emph{Programming in Martin-Lof's Type Theory}, volume~7 of
  \emph{International Series of Monographs on Computer Science}.
\newblock Oxford University Press, 1990.

\bibitem[Norell(2009)]{norell:2009}
Ulf Norell.
\newblock Dependently typed programming in agda.
\newblock In \emph{Proceedings of the 4th International Workshop on Types in
  Language Design and Implementation}, TLDI '09, pages 1--2. ACM, 2009.
\newblock ISBN 978-1-60558-420-1.

\bibitem[Palmgren and Vickers(2007)]{palmgren-vickers:2007}
E.~Palmgren and S.~J. Vickers.
\newblock {Partial Horn logic and cartesian categories}.
\newblock \emph{Annals of Pure and Applied Logic}, 145\penalty0 (3):\penalty0
  314--353, 2007.
\newblock ISSN 0168-0072.
\newblock \doi{https://doi.org/10.1016/j.apal.2006.10.001}.
\newblock URL
  \url{http://www.sciencedirect.com/science/article/pii/S0168007206001229}.

\bibitem[Pfenning and Sch{\"{u}}rmann(1999)]{pfenning-schuermann:1999}
Frank Pfenning and Carsten Sch{\"{u}}rmann.
\newblock System description: Twelf --- a meta-logical framework for deductive
  systems.
\newblock In \emph{Automated Deduction --- CADE-16}, pages 202--206. Springer
  Berlin Heidelberg, 1999.
\newblock ISBN 978-3-540-48660-2.

\bibitem[Schroeder-Heister(1987)]{schroeder-heister:1987}
Peter Schroeder-Heister.
\newblock Structural frameworks with higher-level rules: Philosophical
  investigations on the foundations of formal reasoning, 1987.
\newblock Habilitation thesis.

\bibitem[Shulman(2006)]{shulman:blog:scones-logical-relations}
Michael Shulman.
\newblock Scones, logical relations, and parametricity.
\newblock
  \url{https://golem.ph.utexas.edu/category/2013/04/scones_logical_relations_and_p.html},
  2006.

\bibitem[Shulman(2015)]{shulman:2015}
Michael Shulman.
\newblock Univalence for inverse diagrams and homotopy canonicity.
\newblock \emph{Mathematical Structures in Computer Science}, 25\penalty0
  (5):\penalty0 1203--1277, 2015.
\newblock \doi{10.1017/S0960129514000565}.

\bibitem[Streicher(1991)]{streicher:1991}
Thomas Streicher.
\newblock \emph{Semantics of Type Theory: Correctness, Completeness, and
  Independence Results}.
\newblock Birkhauser Boston Inc., 1991.
\newblock ISBN 0-8176-3594-7.

\bibitem[Streicher(1998)]{streicher:1998}
Thomas Streicher.
\newblock Categorical intuitions underlying semantic normalisation proofs.
\newblock In O.~Danvy and P.~Dybjer, editors, \emph{Preliminary Proceedings of
  the {APPSEM} Workshop on Normalisation by Evaluation}. Department of Computer
  Science, Aarhus University, 1998.

\bibitem[Taylor(1999)]{taylor:1999}
Paul Taylor.
\newblock \emph{Practical Foundations of Mathematics}.
\newblock Cambridge studies in advanced mathematics. Cambridge University
  Press, 1999.
\newblock ISBN 0-521-63107-6.

\bibitem[Voevodsky(2013)]{voevodsky:2013:tts}
Vladimir Voevodsky.
\newblock A test type system, January 2013.
\newblock URL
  \url{http://www.math.ias.edu/Voevodsky/files/files-annotated/Dropbox/Unfinished_papers/Dynamic_logic/Stage_current/2013_03_05_TTS.pdf}.

\end{thebibliography}
\end{document}